\documentclass[a4paper, UKenglish, cleveref, autoref]{lipics-v2021}

\title{Expressive Power of Property Graph Constraint Languages} 

\author{Stefania Dumbrava}
{
Univ.~Paris Cité, Inria, CNRS, IRIF, F-75013, Paris, France \and
ENSIIE, SAMOVAR, Télécom SudParis, Institut Polytechnique de Paris, France \and \url{https://web4.ensiie.fr/~stefania.dumbrava/}}
{stefania.dumbrava@ensiie.fr}
{https://orcid.org/0000-0002-6664-0620}
{}

\author{Nadime Francis}
{Univ.~Gustave Eiffel, CNRS, LIGM, F-77454 Marne-la-Vallée, France \and \url{https://pagespro.univ-gustave-eiffel.fr/nadime-francis}}
{nadime.francis@univ-eiffel.fr}
{https://orcid.org/0009-0009-4531-7435}
{}

\author{Victor Marsault}
{Univ.~Gustave Eiffel, CNRS, LIGM, F-77454 Marne-la-Vallée, France
\and \url{https://victor.marsault.xyz}}
{victor.marsault@univ-eiffel.fr}
{https://orcid.org/0000-0002-2325-6004}
{}

\author{Steven Sailly}
{SAMOVAR, Télécom SudParis, Institut Polytechnique de Paris, France \and \url{https://steven.saill.yt/}}
{steven.sailly@ip-paris.fr}
{https://orcid.org/0009-0005-8215-7996}
{}

\authorrunning{S.~Dumbrava, N.~Francis, V.~Marsault and S.~Sailly}

\Copyright{Stefania Dumbrava, Nadime Francis, Victor Marsault and Steven Sailly} 

\ccsdesc[500]{Theory of computation~Database constraints theory}
\ccsdesc[100]{Theory of computation~Database query languages (principles)}
\ccsdesc[100]{Theory of computation~Database theory}

\keywords{property graphs, graph constraint languages, key constraints, graph functional dependencies, graph generating dependencies} 

\category{} 

\funding{This work was partially supported by the grant ANR-24-CE25-1109.}

\EventEditors{}
\EventNoEds{1}
\EventLongTitle{}
\EventShortTitle{}
\EventAcronym{}
\EventYear{}
\EventDate{}
\EventLocation{}
\EventLogo{}
\SeriesVolume{}
\ArticleNo{}

\usepackage{tikz}
\usetikzlibrary{
    arrows.meta,
    intersections,
    positioning,
    shapes.multipart,
    calc, 
    shapes.misc
}
\tikzset{>={Stealth[scale=1]}}
\tikzset{footnote/.style={
    circle, inner sep=0pt, minimum width=13pt, draw=black, fill=white, font=\footnotesize}}

\colorlet{ourpurple}{gray!30!blue!20}
\colorlet{ourbrown}{orange!20!gray!35}
\colorlet{ourteal}{teal!60!gray!30}
\colorlet{ouryellow}{yellow!50!brown!30}
\colorlet{ourpink}{purple!60!gray!20}

\usepackage{colortbl}
\usepackage{braket}
\usepackage{diagbox}
\usepackage{mathrsfs}
\usepackage{adjustbox}
\usepackage[linesnumbered,ruled,vlined]{algorithm2e}
\SetAlgoNoEnd

\usepackage[bibliography=common]{apxproof}

\newtheoremrep{proposition}[theorem]{Proposition}
\newtheoremrep{lemma}[theorem]{Lemma}

\usepackage{listings}
\lstdefinelanguage{pgkeysLang}
{
  morekeywords={
    FOR,
    WITHIN,
    EXCLUSIVE,
    MANDATORY,
    SINGLETON,
    IDENTIFIER,
    MATCH,
    WHERE,
    NOT,
    OR,
    AND,
    EXISTS,
    RETURN,
    MATCH,
    IN,
    IS,
    NULL,
    hasKey,
    FunctionalProperty,
    context,
    test,
    unique,
    key,
    selector,
    field,
    allInstances,
    isUnique,
    Tuple,
    class,
    extent,
    relationship,
    inverse,
    attribute
  },
  sensitive=false, 
  morecomment=[l]{//}, 
  morecomment=[s]{/*}{*/}, 
  morestring=[b]" 
}

\usepackage{soul}
\usepackage{color}
\colorlet{hlgray}{black!10}
\let\highlight\relax 
\newcommand{\highlight}[2][hlgray]{\mathchoice%
  {\colorbox{#1}{$\rule[-.2em]{0pt}{.9em}\displaystyle#2$}}%
  {\colorbox{#1}{$\rule[-.2em]{0pt}{.9em}\textstyle#2$}}%
  {\colorbox{#1}{$\rule[-.2em]{0pt}{.9em}\scriptstyle#2$}}%
  {\colorbox{#1}{$\rule[-.2em]{0pt}{.9em}\scriptscriptstyle#2$}}}%
\definecolor{eclipseBlue}{RGB}{42,0.0,255}
\definecolor{eclipseGreen}{RGB}{63,127,95}
\definecolor{eclipsePurple}{RGB}{127,0,85}
 
\newcommand{\fragmentfont}[1]{\ensuremath{\text{{\rm\sc{}#1}}}}
\newcommand{\pgkey}{\fragmentfont{PG-Key}}
\newcommand{\pgkeys}{\fragmentfont{PG-Keys}}
\newcommand{\pgschema}{\fragmentfont{PG-Schema}}
\newcommand{\mpgkey}{\fragmentfont{mPG-Key}}
\newcommand{\mpgkeys}{\fragmentfont{mPG-Keys}}
\newcommand{\ged}{\fragmentfont{GED}}
\newcommand{\gfd}{\fragmentfont{GFD}}
\newcommand{\gfds}{\fragmentfont{GFDs}}
\newcommand{\gdd}{\fragmentfont{GDD}}
\newcommand{\ggd}{\fragmentfont{GGD}}
\newcommand{\ggds}{\fragmentfont{GGDs}}
\newcommand{\tgd}{\fragmentfont{TGD}}
\newcommand{\egd}{\fragmentfont{EGD}}

\renewcommand\set[1]{\csname set \endcsname{\mskip-\medmuskip#1\mskip-\medmuskip}}
\renewcommand{\bar}[1]{\overline{#1}}

\newcommand{\splbl}{\boldsymbol{\ell_0}}
\newcommand{\Ell}{\mathscr{L}}

\newcommand{\doteqc}{\mathrel{=_c}}
\newcommand{\ndoteqc}{\mathrel{\not\doteqc}}

\newcommand{\cq}{\text{\rm\sc CQ}}
\newcommand{\cqs}{\text{\rm\sc CQs}}
\newcommand{\cqeq}{\ensuremath{\cq[=]}}
\newcommand{\cqneq}{\ensuremath{\cq[=,\neq]}}

\newcommand{\rpq}{\text{\rm\sc RPQ}}

\newcommand{\crpq}{\text{\rm\sc CRPQ}}
\newcommand{\crpqs}{\text{\rm\sc CRPQs}}
\newcommand{\crpqeq}{\ensuremath{{\crpq[=]}}}
\newcommand{\crpqeqc}{\ensuremath{{\crpq[\doteqc]}}}
\newcommand{\crpqneq}{\ensuremath{{\crpq[=, \neq]}}}
\newcommand{\crpqneqc}{\ensuremath{{\crpq[=_c, \neq_c]}}}
\newcommand{\lstinmath}[1]{\text{\lstinline{#1}}}

\newcommand{\conp}{\textbf{coNP}}
\newcommand{\piip}{\ensuremath{\mathbf{\Pi_2^P}}}
\newcommand{\diip}{\ensuremath{\mathbf{\Delta_2^P}}}
\newcommand{\tiip}{\ensuremath{\mathbf{\Theta_2^P}}}

\DeclareMathOperator{\im}{\mathsf{Im}}
\DeclareMathOperator{\dom}{\mathsf{Dom}}

\DeclareMathOperator{\obj}{\mathsf{Obj}}
\DeclareMathOperator{\lab}{\mathsf{Lab}}
\DeclareMathOperator{\Lab}{\lab}
\DeclareMathOperator{\key}{\mathsf{Key}}

\DeclareMathOperator{\val}{\mathsf{Val}}

\DeclareMathOperator{\var}{\mathsf{Var}}
\DeclareMathOperator{\Var}{\var}
\DeclareMathOperator{\reg}{\mathsf{RegExp}}
\DeclareMathOperator{\paths}{\mathsf{Paths}}
\DeclareMathOperator{\ex}{\mathsf{exists}}

\newcommand{\pginmathkw}[1]{\mathrel{\lstinmath{#1}}}
\newcommand{\FOR}{\pginmathkw{FOR}}

\newcommand{\WITHIN}{\pginmathkw{WITHIN}}

\newcommand{\EXCLUSIVE}{\pginmathkw{EXCLUSIVE}}
\newcommand{\MANDATORY}{\pginmathkw{MANDATORY}}
\newcommand{\SINGLETON}{\pginmathkw{SINGLETON}}

\newcommand{\WHERE}{\pginmathkw{WHERE}}
\newcommand{\ALL}{{\color{eclipsePurple}\pginmathkw{ALL}}}
\newcommand{\WALK}{{\color{eclipsePurple}\pginmathkw{WALK}}}

\newcommand{\enumstyle}[1]{\textcolor{lipicsGray}{\sffamily\bfseries\upshape\mathversion{bold}#1}}

\newtheorem{property}[theorem]{Property}

\hideLIPIcs 
\nolinenumbers
\begin{document}

\maketitle

\begin{abstract}
We present the first principled and systematic study of the expressive power of property graph constraint languages, focused on the recent PG-Keys language, set to inform the upcoming revision of the GQL standard. To this end, we position PG-Keys within the broader landscape of existing formalisms.
In particular, we compare PG-Keys with two core property graph constraint languages: Graph Functional Dependencies (GFD) and Graph Generating Dependencies (GGD). 
One hurdle is that these formalisms allow different kinds of graph pattern languages and predicates. To make a fair comparison, based on their structural differences only, we first present a unifying framework.
Within this framework, we consider conjunctive regular path queries (CRPQ) as graph patterns with equality and inequality predicates. 
We then identify well-behaved fragments, establish expressiveness inclusion, and prove separation results, yielding a complete and strict hierarchy of expressive power.
The results identify precisely when PG-Keys provide strictly greater expressive power, clarifying their place among state-of-the-art property graph constraint formalisms.
\end{abstract}

\section{Introduction}\label{sec:intro}
Integrity constraints are central to ensuring data quality and supporting reliable
data integration and exchange. Constraint languages have been
extensively investigated~\cite{DBLP:books/aw/AbiteboulHV95,DBLP:journals/jacm/Fagin82} for relational databases,
leading to a deep understanding of their expressive power, complexity, and impact on
query processing. Conversely, the comparative expressiveness of constraint languages for \emph{graph databases}
remains underexplored, as the interplay between graph topology and data values introduces new modeling and reasoning challenges.

The most generic form of integrity constraints is given by general dependencies~\cite{DBLP:conf/icalp/BeeriV81}, defined as being of the form~$\forall \bar{x} \forall \bar{y} \big(\varphi(\bar{x}, \bar{y}) \Rightarrow \exists \bar{z} \psi(\bar{y}, \bar{z})\big)$. The body of the formula, $\varphi$, is a non-empty conjunction of relational atoms and its head, $\psi$, is either a single equality atom, for equality generating dependencies ($\egd$), or a non-empty conjunction of relational atoms, for tuple generating dependencies ($\tgd$). When considering graph dependencies, relations are restricted to arity two. Building on this foundation, \emph{Graph Functional Dependencies} ($\gfd$)~\cite{DBLP:conf/sigmod/FanWX16a} and \emph{Graph Generating Dependencies} ($\ggd$)~\cite{DBLP:conf/cikm/ShimomuraFY20} adapt $\egd$ and $\tgd{}$, respectively.

Property graphs~\cite{DBLP:conf/amw/Angles18} are widely used as data models for graph databases, representing data as multi-labeled multigraphs with key-value properties on both nodes and edges. As their adoption for highly interconnected and semantically rich data grows, several dedicated constraint formalisms have been proposed.

Recently, the \pgkeys{} language \cite{DBLP:conf/sigmod/AnglesBDFHHLLLM21} was introduced to constrain property graphs and to reference and identify their objects. A \pgkey{} associates a scope with a descriptor and a combination of three assertion keywords: $\MANDATORY$, $\EXCLUSIVE$ and/or $\SINGLETON$. The scope defines the objects to which the constraint applies and the descriptor how to obtain a key value from each such object. $\MANDATORY$ requires that each object in scope has at least one key value, $\EXCLUSIVE$  that no two objects in scope share a key value, and $\SINGLETON$ that each object in scope has at most one key value. Despite growing interest, the landscape of property graph constraint languages remains fragmented. Manouvrier and Belhajjame~\cite{DBLP:conf/adbis/ManouvrierB24,DBLP:journals/is/ManouvrierB26} implement translations of constraint languages, including \ged{} and \ggd, into an overapproximation of \pgkeys{} (see \cref{sec:related}). To our knowledge, \emph{no prior work has formally positioned \pgkeys{} relative to these formalisms}.

Understanding how their core features relate is crucial to establishing expressiveness boundaries and further informing the recent standardized property graph languages GQL~\cite{GQL-ISO} and the 
SQL/PGQ extension~\cite{SQL-PGQ} of the SQL standard.
This is especially relevant as \pgschema~\cite{DBLP:journals/pacmmod/AnglesBD0GHLLMM23}, a superset of \pgkeys, is a candidate to be integrated in the upcoming second edition of GQL.

\begingroup
\newcounter{figurefootnote}
\newcommand{\nextfigfootnote}{\stepcounter{figurefootnote}\thefigurefootnote}
    
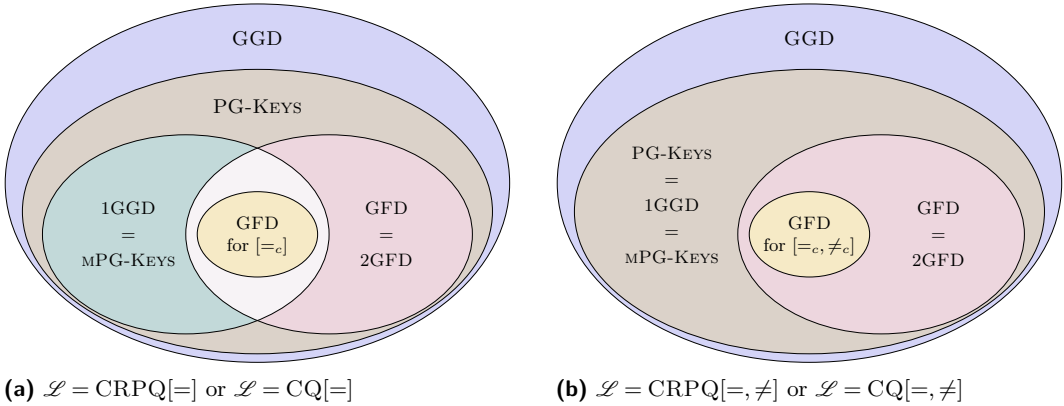
\begin{figure}
    \newlength{\vmtmp}\setlength{\vmtmp}{\dimexpr.5\linewidth-1em\relax}
    \subfloat[$\Ell\mathbin=\crpqeq$ or $\Ell\mathbin=\cqeq$]{%
        \label{fig:venn-eq}%
        \resizebox{\vmtmp}{!}{\begin{tikzpicture}
    \begin{scope}[draw=black]
        \draw[fill=ourpurple, name path=ggd] (0,0.4) ellipse (4.4 and 3.15);
        \draw[fill=ourbrown, name path=pgk] (0,-0.1) ellipse (4.1 and 2.5);
        \begin{scope}[blend group=overlay]
            \fill[fill=ourteal, name path=gfd] (-1.25,-0.5) coordinate (tealcenter) ellipse (2.5 and 1.75);
            \fill[fill=ourpink, name path=1ggd] (1.25,-0.5) coordinate (purplecenter) ellipse (2.5 and 1.75);
        \end{scope}
         \draw (-1.25,-0.5) coordinate ellipse (2.5 and 1.75);
         \draw (1.25,-0.5) coordinate ellipse (2.5 and 1.75);
         
        \draw[fill=ouryellow] ($.5*(tealcenter)+.5*(purplecenter)$) coordinate (yellowcenter) ellipse (1.05 and 0.75);

        \node at (0,2.95) (ggd_txt) {$\ggd$};
        \node at (0,1.75) (pgk_txt) {\pgkeys};
        \path (purplecenter) ++(1,0) node (gfd_txt) {\begin{tabular}{c}\gfd\\$=$\\$2\gfd$\end{tabular}};
        \node at (yellowcenter) (gfdc_txt) {\begin{tabular}{c}$\gfd$\\[-2pt]for $[\doteqc]$\end{tabular}};
        \path (tealcenter) ++(-1,0) node (1ggd_txt) 
            {\begin{tabular}{c}
                $1\ggd$\\$=$\\\mpgkeys\end{tabular}};

        \path (0,2.4) node[footnote] {\nextfigfootnote};
        \path (-1.25,1.25) node[footnote] {\nextfigfootnote};
        \path (1.25,1.25) node[footnote] {\nextfigfootnote};
        \path (0,.8) node[footnote] {\nextfigfootnote};
        \path (-1.75,-0.45) node[footnote] {\nextfigfootnote};
        \path (2.7,-0.45) node[footnote] {\nextfigfootnote};
        \path (0,-1.25) node[footnote] {\nextfigfootnote};
                
    \end{scope}
\end{tikzpicture}}%
    }%
    \hfill%
    \subfloat[$\Ell\mathbin=\crpqneq$ or $\Ell\mathbin=\cqneq$]{%
        \label{fig:venn-neq}%
        \resizebox{\vmtmp}{!}{\begin{tikzpicture}
    \begin{scope}[draw=black]
        \draw[fill=ourpurple, name path=ggd] (0,0.4) ellipse (4.4 and 3.15);
        \draw[fill=ourbrown, name path=pgk] (0,-0.1) ellipse (4.1 and 2.5);
        \draw[fill=ourpink, name path=1ggd] (1.25,-0.5) coordinate (purplecenter) ellipse (2.5 and 1.75);
         
        \draw[fill=ouryellow] ($.5*(tealcenter)+.5*(purplecenter)$) coordinate (yellowcenter) ellipse (1.05 and 0.75);

        \node at (0,2.95) (ggd_txt) {$\ggd$};
        \node at (-2.4,0) (pgk_txt) {\begin{tabular}{c}$\pgkeys$\\$=$\\ $1\ggd$\\ $=$ \\$\mpgkeys$\end{tabular}};
        \path (purplecenter) ++ (1,0) node (gfd_txt) {\begin{tabular}{c}\gfd\\$=$\\$2\gfd$\end{tabular}};
        \node at (yellowcenter) (gfdc_txt) {\begin{tabular}{c}$\gfd$\\[-2pt]for $[\doteqc, \ndoteqc]$\end{tabular}};

        \path (0,2.4) node[footnote] {1};
        \path (-1.9,0.5) node[footnote] {\nextfigfootnote};
        \path (-1.9,-0.4) node[footnote] {5};
        \path (1.25,1.25) node[footnote] {\nextfigfootnote};
        \path (2.7,-0.45) node[footnote] {6};
        \path (0,-1.25) node[footnote] {7};
        
    \end{scope}
\end{tikzpicture}}%
    }%

    \medskip

    \noindent%
    \setcounter{figurefootnote}{0}%
\begin{tabularx}{\linewidth}{>{\footnotesize\raggedleft}r@{\hskip5pt}>{\footnotesize}l>{\footnotesize}X}
    \hline
    \nextfigfootnote. & $\pgkeys{} \subsetneq \ggd$:&Propositions\,\ref{p:pgk-sub-ggd-cq-eq} (inclusion) and~\ref{p:ggd-neq-pgk-cq-eq} (separation) \\
    \rowcolor{black!10}\nextfigfootnote. & $\mpgkeys{} \subsetneq \pgkeys$:&\cref{p:mpgk-neq-pgk-cq-eq} (separation) and  inclusion is immediate \\
    \nextfigfootnote. & $\gfd{} \subsetneq \pgkeys$:&Propositions~\ref{p:gfd-sub-pgk-cq-eq} (inclusion) and  \ref{p:closure-induced-subgraph}.\ref{p:closure-induced-subgraph:sep} (separation)\\
    \rowcolor{black!10}\nextfigfootnote. & $1\ggd{}$ and $\gfd$ are incomparable:&Propositions~\ref{p:closure-induced-subgraph}.\ref{p:closure-induced-subgraph:sep} ($1\ggd\not\subseteq \gfd$) and~\ref{p:ggd-neq-pgk-cq-eq} ($1\ggd\not\supseteq \gfd$)\\
    \nextfigfootnote. & $1\ggd{} = \mpgkeys$:&Simple syntax rewriting (\cref{p:mpgk-is-1ggd-cq-eq}) \\
    \rowcolor{black!10}\nextfigfootnote. & $\gfd = 2\gfd $:&\cref{l:split-gfd}\\
    \nextfigfootnote. & $\gfd$ for $[\doteqc] \subsetneq (1\ggd \cap \gfd)$:&Propositions~\ref{p:gfdc-sub-ggd-cq-eq} (inclusion) and \ref{p:gfdc-neq-gfd-1ggd} (separation)\\
    \rowcolor{black!10}\nextfigfootnote. & $\pgkeys = 1\ggd$:&\cref{p:mpgkey=pgkey}\\
    \nextfigfootnote. & $\gfd \subsetneq 1\ggd$:&Propositions~\ref{p:gfd-sub-1ggq-nq-neq} (inclusion) and \ref{p:closure-induced-subgraph}.\ref{p:closure-induced-subgraph:sep} (separation)\\
    \hline
\end{tabularx}
    \hspace*{0cm plus 1fill}%
    
    \caption{Strict dependency inclusions for constraint language for query language~$\Ell$.}
    \label{fig:inclusion-cqeq-cqneq}
\end{figure}

\subparagraph*{{Contributions.}} We address the existing gap in the literature by providing the first formal comparison of these languages, laying the foundations for their principled design and practical usage. We make the following contributions, which apply regardless of whether constraint patterns are expressed as Conjunctive Queries (\cq{}) or Conjunctive Regular Path Queries (\crpq{}).
\begin{enumerate}
\item \emph{Fine-grained language feature analysis}. We analyze how the equality and inequality of vertices and edges, along with the number of variables that are \emph{shared} between the source\footnotemark\addtocounter{footnote}{-1} and the target\footnote{These are sometimes called the \emph{scope} and \emph{descriptor} of the key, or as the \emph{body} and \emph{head} of the constraint.}, impact the comparative expressive power of the constraint languages. 
This reveals well-behaved fragments, namely $1\ggd$ and $2\gfd$, and clarifies the effect of one important design choice of \pgkeys{}: only one variable can be shared between scope and descriptor.

\item \emph{Expressiveness inclusions}. We define the conditions under which a set of constraints in one language can be translated into a set of constraints in another and establish expressiveness inclusions
(see Figure~\ref{fig:inclusion-cqeq-cqneq}).

\item \emph{Separation results}. We prove that certain constraints are inherently inexpressible in some languages or fragments,  even under extensions, thus precisely characterizing their expressiveness limitations. 
Our results show that all inclusions in Figure~\ref{fig:inclusion-cqeq-cqneq} are strict.
\end{enumerate}

We highlight the most unexpected results. Because \ggd{} allows sharing multiple variables between the body and the head of a constraint, it is possible to simulate the $\SINGLETON$ and $\EXCLUSIVE$ keywords of $\pgkeys$, which results in $\pgkeys{} \subseteq \ggd{}$ (\cref{fig:inclusion-cqeq-cqneq}, Item~1). Moreover, when inequality between edges, vertices, and properties is allowed, a single shared variable is enough to perform this translation. This motivates the definition of $1\ggd$, as a strict syntactic restriction of $\ggd$ that precisely captures $\pgkeys$: $\pgkeys{} = 1\ggd{}$ in that case (\cref{fig:inclusion-cqeq-cqneq}, Item~8). Finally, we show how the $\SINGLETON$ keyword can simulate multiple shared variables in some restricted cases, leading to $\gfd{} \subseteq \pgkeys{}$ (\cref{fig:inclusion-cqeq-cqneq}, Item~3).

\subparagraph*{Outline.} The paper is organized as follows. \cref{sec:preliminaries} introduces the property graph model, the underlying query languages (\cq{} and \crpq{} with equality and inequality), and recasts \gfd, \ggd, and \pgkeys{} in a unified parametric framework. \cref{sec:gen} identifies core fragments, in particular those based on the number~$n$ of shared variables ($n$\gfd{} and $n$\ggd{}) and on the allowed keywords (\mpgkeys{} allows only the keyword $\MANDATORY$). We also establish in \cref{sec:gen} that the $n$\ggd{} hierarchy is strict while the $n$\gfd{} hierarchy collapses. \cref{sec:cq:eq,sec:cq:neq} provide a complete expressiveness analysis under two settings: queries with equality only, and queries with equality and inequality. For each, we establish strict inclusion hierarchies via translations and separation results, leading up to the collapse of \pgkeys{} with 1\ggd{} with equality and inequality. \cref{sec:related} discusses related work, and \cref{sec:conclusion} concludes with implications for constraint design, complexity, and future work. Most proofs are deferred to the appendix due to space constraints.

\section{Preliminaries}\label{sec:preliminaries}
We introduce the notations and definitions used henceforth. For any positive integer $n$, let $[n]$ denote $\set{1, 2, \ldots, n}$. 
We fix countably infinite and pairwise distinct sets $\obj$, $\lab$, $\key$, $\val$, and $\var$ of objects, labels, keys, values, and variables, respectively.
$\reg(\lab)$ denotes the set of all regular expressions built from $\lab$
and, given a regular expression~$r\in\reg(\lab)$, $L(r)$ denotes the language described by~$r$ with the usual definition.

\subsection{Data Model}
Databases in this document are modeled as \emph{property graphs}; that is, multi-edge, multi-label graphs whose vertices and edges additionally carry sets of key-value pairs called \emph{properties}.

\begin{definition}\label{def:graph}
    A \emph{(property) graph} is a tuple $G = (V, E, \eta, \lambda, \pi)$ where $V \subset \obj$ is a finite set of vertices, $E \subset \obj$ is a finite set of edges with $V \cap E = \emptyset$, $\eta: E \to V \times V$ is the incidence function assigning source and target vertices to each edge, $\lambda: V \cup E \to 2^{\lab}$ is a labeling function such that, for each $o\in V\cup E$, $\lambda(o)$ is finite, and $\pi: (V \cup E)\times \key \rightharpoonup \val$ is a partial function that assigns a value to vertex or edge properties, such that it has a finite domain. When the graph is clear from context, we may use $u.a$ as a shorthand for $\pi(u,a)$.
\end{definition}

A graph that will be used as a running example throughout the document is given in Figure~\ref{fig:running-example-graph}.    
We let~$\paths(G)$ denote the set of \emph{paths} in~$G$, that is, nonempty sequences of edges~$e_1\cdots e_n$ such that 
for every~$i\in [n-1]$, the target of $e_i$ coincides with the source of $e_{i+1}$.
We extend~$\eta$ and~$\lambda$ to~$\paths(G)$:
$\eta(e_1\cdots e_n) = (s,t)$ where $s$ is the source of $e_1$ and $t$ is the target of $e_n$; and~$\lambda(e_1\cdots e_n)=\{a_1\cdots a_n\mid a_1\in\lambda(e_1),\ldots,a_n\in\lambda(e_n)\}$.  

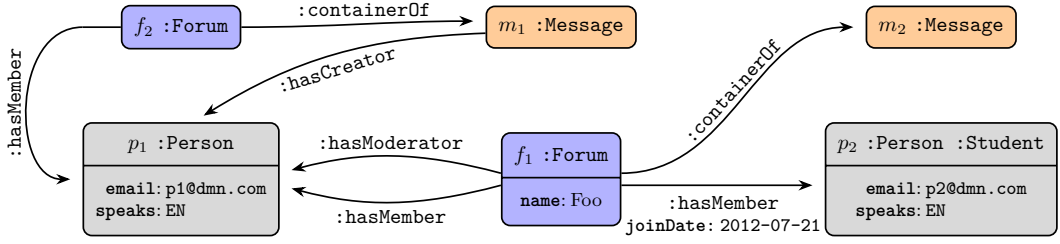
\begin{figure}
    \adjustbox{width=\linewidth}{\begin{tikzpicture}[
    Vertex/.style = {
        rectangle split,
        rectangle split parts = 2,
        draw = black,
        rounded corners,
        thick,
        y=9mm,
        x=12.5mm,
        inner sep=2mm,
    },
    Forum/.style={fill=blue!30},
    Message/.style={fill=orange!40},
    Person/.style={fill=gray!30},
    every edge/.style = {
        ->,
        draw = black,
        thick,
        shorten >= 2mm
    },
    bend angle=15,
    ]\renewcommand{\arraystretch}{.8}
    
    \newcommand{\vmtmpdist}{.1}
    
    \node[Vertex,Forum] (f1) at (0, 0) {
        $f_1$ \texttt{:Forum}
        \nodepart[inner sep=0pt]{second}
        \begin{tabular}{@{}r@{}l@{}}
            \textbf{\texttt{name}}:\hspace*{.5ex}& Foo 
        \end{tabular}
    };
    
    \node[Vertex,Person] (p1) at (-5, 0) {
        $p_1$ \texttt{:Person}
        \nodepart{second}
        \begin{tabular}{@{}r@{}l@{}}
            \textbf{\texttt{email}}:\hspace*{.5ex}& \texttt{p1@dmn.com} \\
            \textbf{\texttt{speaks}}:\hspace*{.5ex}& \texttt{EN}
        \end{tabular}%
    };
    
    \node[Vertex,Person] (p2) at (6, 0) {
        $p_2$ \texttt{:Person :Student}
        \nodepart{second}
        \begin{tabular}{@{}r@{}l@{}}
            \textbf{\texttt{email}}:\hspace*{.5ex}& \texttt{p2@dmn.com} \\
            \textbf{\texttt{speaks}}:\hspace*{.5ex}& \texttt{EN}
        \end{tabular}
    };
    
    \node[Vertex,Forum,rectangle] (f2) at (-5, 2.8) {
        $f_2$ \texttt{:Forum}
    };
    
    \node[Vertex,Message,rectangle] (m1) at (0, 2.8) {
        $m_1$ \texttt{:Message}
    };
    
    \node[Vertex,Message,rectangle] (m2) at (6, 2.8) {
        $m_2$ \texttt{:Message}
    };

    \draw ($(f2.east)$) edge[out=0,in=180] node[above] {\texttt{:containerOf}} ($(m1.west)+(0,\vmtmpdist)$);

    \draw[thick] (f2.west) -- (p1.west|-f2.west) edge[out=180,in=180,looseness=1.2] node[sloped, above,rotate=180] {\texttt{:hasMember}} (p1);
    
    \draw ($(m1.west)+(0,-\vmtmpdist)$) edge[bend right] node[sloped, below] {\texttt{:hasCreator}} ($(p1.north)+(\vmtmpdist*2,0)$);
    
    \draw ($(f1.west)+(0,\vmtmpdist)$) edge[bend right] node[above] {\texttt{:hasModerator}} ($(p1.east)+(0,\vmtmpdist)$);

    \draw ($(f1.west)+(0,-\vmtmpdist)$) edge[bend left] node[below] {\texttt{:hasMember}} ($(p1.east)+(0,-\vmtmpdist)$);
    
    \draw ($(f1.east)+(0,-\vmtmpdist)$) edge node[below] {
        \begin{tabular}{@{}c@{}}
            \normalsize\texttt{:hasMember} \\
            \texttt{\textbf{joinDate}:\hspace*{.5ex}2012-07-21}
        \end{tabular}
    } ($(p2.west)+(0,-\vmtmpdist)$);
    
    \draw ($(f1.east)+(0,\vmtmpdist)$) edge[out=0,in=180] node[sloped, above] {\texttt{:containerOf}} (m2.west);
\end{tikzpicture}}
    \caption{Property graph instance based on the LDBC SNB Schema~\cite{DBLP:journals/corr/abs-2001-02299}.}
    \label{fig:running-example-graph}
\end{figure}

We sometimes consider \emph{unlabeled} graphs, i.e., graphs in which labels are indistinguishable. 
For technical reasons, we fix 
a special label~$\splbl\in\Lab$ and say that a graph $G=(V, E, \eta, \lambda, \pi)$ is \emph{unlabeled} if $\lambda(o) = \set{\splbl}$, for all $o \in V \cup E$. Additionally, we say that~$G$ is \emph{without data} if $G$ is unlabeled and~$\pi$ has an empty domain.

Several proofs will rely on a \emph{clique of a given size~$n$}, that is, a graph without data, with~$n$ vertices and, for every pair~$(s,t)$ of vertices, an edge~$e$ such that~$\eta(e)=(s,t)$; note that such cliques have self-loops. Additionally, some statements use the notions of \emph{subgraph} and \emph{induced-subgraph} relations.
A subgraph $G'$ of $G$ is simply a substructure of $G$; an induced subgraph necessarily retains all labels and properties of objects of $G$ that appear in $G'$. Both notions are formally defined below.

\begin{definition}[Subgraph, Induced Subgraph, adapted from~\cite{DBLP:books/daglib/0070576}]
    \label{d:subgraph}
    Let $G = (V, E, \eta, \lambda, \pi)$ be a graph. A \emph{subgraph} $G' = (V', E', \eta', \lambda', \pi')$, denoted by $G' \sqsubseteq G$, is a graph that is also a substructure of $G$, that is, it meets the following five conditions.
    \begin{gather*}
        V' \subseteq V \quad\qquad E' \subseteq \eta^{-1}(V'\times V') \quad\qquad \eta' = \eta\restriction_{E'}
        \quad\qquad
        \forall o\in V'\cup E'\quad\lambda'(o) \subseteq \lambda(o)\\
        \forall o\in V'\cup E', \forall a\in \key \quad (o,a)\in\dom(\pi') \implies \pi'(o,a) = \pi(o,a)
    \end{gather*}
    Moreover, we say that $G'$ is the subgraph of $G$ \emph{induced by} $V'$ and $E'$, denoted by~$G' \sqsubseteq_\text{ind} G$, if it satisfies:
    $\lambda' = \lambda\restriction_{V' \cup E'}$ and $\pi' = \pi\restriction_{(V' \cup E') \times \key}$.
\end{definition}

\subsection{Query Languages}
\label{sec:query-lang}
Our results are stated for \emph{conjunctive regular path queries}, as defined next. Our results also apply to the case of \emph{conjunctive queries}, which are here defined as a subcase of \crpq{}.

\begin{definition}[Conjunctive Regular Path Query]\label{d:crpq}
    A \emph{conjunctive regular path query} $(\crpq)$ is a tuple $Q(\bar{x}) = (V_Q, P_Q, \eta_Q, \lambda_Q)$ where $V_Q, P_Q \subset \var$ are pairwise disjoint finite sets of \emph{vertex} and \emph{path} variables, respectively; $\eta_Q: P_Q \to V_Q \times V_Q$ is a function that assigns \emph{source} and \emph{target} vertex variables to each path variable, $\lambda_Q = \lambda_{Q,1} \cup \lambda_{Q,2}$ is a labeling function with $\lambda_{Q,1}: V_Q \to {\lab}$ and $\lambda_{Q,2}: P_Q \to \reg(\lab)$; and $\bar{x}$ is a tuple of variables from $V_Q$ and $P_Q$, called \emph{output variables}.
    We write $\var_Q = V_Q \cup P_Q$ the set of variables that appear in $Q$. 
\end{definition}

Given two \crpqs{} $Q_1$ and $Q_2$, their conjunction, obtained by identifying identical variables, is also a $\crpq$, denoted as $Q_1 \land Q_2$. A \crpq{} is interpreted as a graph pattern, whose edges represent paths of the graph that conform to the given regular expression, as defined below.

\begin{definition}[Match]\label{d:homomorphism}
    Let $Q(\bar{x}) = (V_Q, P_Q, \eta_Q, \lambda_Q)$ be a $\crpq$ and $G = (V, E, \eta, \lambda, \pi)$ be a graph. A \emph{match} of $Q$ over $G$ is a function $h: V_Q \cup P_Q \to V \cup \paths(G)$ such that:
    \begin{enumerate}
        \item for all $u \in V_Q$, $h(u) \in V$ and $\lambda_Q(u)\in \lambda(h(u))$;
        \item for all $e \in P_Q$, $h(e) \in \paths(G)$ such that $\eta(h(e)) = h(\eta_Q(e))$ and $\lambda(h(e)) \cap L(\lambda_Q(e)) \neq \emptyset$.
    \end{enumerate}
    The \emph{answer} of $Q(x_1,\ldots,x_k)$ over $G$ is given by:
    \begin{equation}
        Q(G) =~\set{(h(x_1), \dots, h(x_k)) | h \text{ is a match of } Q \text{ over } G}
    \end{equation}
\end{definition}

\begin{remark}
    We only consider \emph{nonempty} paths.  Hence, we always assume, without loss of generality, that every regular expression~$r$ is such that~$L(r)$ does not contain the empty word.
\end{remark}

Given two \crpq{}s $Q_1(\bar{x_1}),Q_2(\bar{x_2})$ and two matches~$h_1,h_2$ of respectively $Q_1$ and $Q_2$, we say they are \emph{consistent} if $h_1(x)=h_2(x)$ for all~$x$ in both $\bar{x_1}$ and $\bar{x_2}$.

In the property graph model, both vertices and edges may bear properties, but paths are not allowed to.
Hence, conditions over property values will not be allowed over path variables unless they bind to a single edge; i.e., edge variables.

\begin{definition}[Edge variables, Conjunctive Queries] Let $Q(\bar{x}) \mathbin= (V_Q, P_Q, \eta_Q, \lambda_Q)$ be a \crpq.%
\begin{enumerate}
    \item
    A path variable~$x\in P_Q$ is called an \emph{edge variable} if~$\lambda_{Q}(x)$ describes a language containing only words of length~1 (that is,~$\lambda_{Q}(x)$ is a disjunction of labels).
    \item $Q$ is called a \emph{conjunctive query} (\cq) if all path variables in~$Q$ are edge variables.
\end{enumerate}
\end{definition}

In the special case of $\cq{}$, a match maps an edge (variable) of the query to a path of length 1 (that is, an edge).  A match is in fact a graph-to-graph homomorphism in that case.

In addition to the topological conditions expressed by \crpq{}, our queries can use (in)equalities between object identifiers, data values, and constants.

More precisely, \emph{predicates} we consider are expressions of the form $x = y$, $x.a = y.b$, $x.a = c$, $x \neq y$, $x.a \neq y.b$ or $x.a \neq c$, where $x, y \in \var$, $a, b \in \key$, and $c \in \val$.
The first three are \emph{equality} predicates and the others are \emph{inequality} predicates.

We also use the following derived predicates: $\bot$, for an arbitrary unsatisfiable set of predicates (for instance $\set{x.a = 1, x.a = 2}$) and $\ex(x.a)$ for~$x.a=x.a$ which ensures that the object bound to~$x$ has a property~$a$.
Similarly, we will use $\ex(\bar{\nu})$, for $\set{\nu=\nu | \nu\in\bar{\nu}}$, where~$\bar \nu$ is a tuple of expressions of the form $x$ or $x.a$, and $\ex(\bar{\nu}, \bar{\nu}')$ for $\ex(\bar{\nu}) \land \ex(\bar{\nu}')$.

\begin{definition}[Query]
    A \emph{query} is a pair $S = (Q(\bar{x}), C(\bar{x}))$ where~$Q$ is a \crpq{}, called the \emph{pattern} of the query and~$C$ is a set of predicates, called \emph{conditions}, such that for every occurrence of $x.a$ in~$C$, $x$ is a vertex or edge variable in~$Q$. We write $\var_S = \var_Q$.
    
    We define the conjunction of queries $S_1 = \big(Q_1(\bar{x}), C_1(\bar{x})\big)$ and $S_2 = \big(Q_2(\bar{x}'), C_2(\bar{x}')\big)$ as $S_1 \land S_2 = \big(Q_1 \land Q_2(\bar{x}, \bar{x}'), C_1 \cup C_2(\bar{x}, \bar{x}')\big)$.
    
    The set of all queries is denoted by \crpqneq{}. The sets
    \crpqeq{}, \crpqeqc{} and \crpqneqc{} are defined similarly, except predicates are restricted:~$C$ may contain only equality predicates for~\crpqeq{}, predicates of the form~$x.a=c$ for \crpqeqc{} and predicates of the form~$x.a=c$ and $x.a \neq c$ for \crpqneqc{}.
\end{definition}

A \emph{match} of a query $(Q,C)$ over $G$ is a match~$h$ of~$Q$ over~$G$ that satisfies all predicates in~$C$, after replacing $x$ and $x.a$ with $h(x)$ and the value associated with the property key $a$ of $h(x)$, respectively. See \cref{pred-satisfaction} in the appendix for details.

\begin{toappendix}
\begin{definition}[Satisfaction of predicates, conditions, and matches for patterns and queries]\label{pred-satisfaction}
    Let $G = (V,E,\eta, \lambda, \pi)$ be a graph and $h$ be a match of a pattern~$Q$ over~$G$. Let $p$ be a predicate. We say that $h$ \emph{satisfies} $p$, denoted as $h\models p$, if:
    \begin{itemize}
        \item If $p$ is of the form $x = y$ (resp. $x \neq y$), then $x,y\in\dom(h)$ and $h(x) = h(y)$ (resp. $h(x) \neq h(y)$);
        \item If $p$ is of the form $x.a = y.b$ (resp. $x.a \neq y.b$), then $x,y\in\dom(h)$, $(h(x),a), (h(y),b) \in \dom(\pi)$ and $\pi(h(x),a) = \pi(h(y),b)$ (resp. $\pi(h(x),a) \neq \pi(h(y),b)$);
        \item If $p$ is of the form $x.a = c$ (resp. $x.a \neq c$), then $x\in\dom(h)$, $(h(x),a) \in \dom(\pi)$ and $\pi(h(x),a) = c$ (resp. $\pi(h(x),a) \neq c$);
    \end{itemize}
    
    Finally, if $P$ is a set of predicates, we say that \emph{$h$ satisfies~$P$}, denoted by $h\models P$, if $h\models p$ for each $p\in P$. Given some pattern~$Q$ and conditions~$P$ we say that~$h$ is a \emph{match} of~$(Q,P)$ if it is a match of~$Q$ that satisfies~$P$.
\end{definition}
\end{toappendix}

\subsection{Revisiting Known Graph Constraint Languages}
\label{sec:constraint-lang}
We revisit the main graph constraint languages that form the basis of our study, namely \gfd{}~\cite{DBLP:conf/sigmod/FanWX16a}, \ggd{} \cite{DBLP:conf/cikm/ShimomuraFY20}, and \pgkeys{} \cite{DBLP:conf/sigmod/AnglesBDFHHLLLM21}, and recast them in a common, parametric framework.

\begin{definition}[\gfd{} and \ggd{}]
    \label{d:ggd}
    A \emph{Graph Generating Dependency} $(\ggd)$ for a query language~$\Ell$
    is a tuple $\varphi = (S(\bar{x}, \bar{y}) \Rightarrow T(\bar{x}, \bar{z}))$
    where $\bar{x},\bar{y},\bar{z}$ are disjoint tuples of variables, and
    $S$ and $T$ are queries from~$\Ell$. When patterns or conditions are trivial, i.e., when they do not restrict the topology or the data of the graph, we may omit them from the constraint.
    A \emph{Graph Functional Dependency} $(\gfd)$
    is a $\ggd${} where~$\bar{z}$ is empty and the pattern of $T$ is trivial. A $\gfd$ will be written as $(S(\bar{x}, \bar{y}) \Rightarrow C_t(\bar{x}))$ for short.

    Moreover, we say that a graph $G$ satisfies $\varphi$, denoted as $G \models \varphi$, if for all matches $h_s$ of $S$ over $G$ there exists a match $h_t$ of $T$ over $G$ that is consistent with $h_s$.
    Otherwise, we say that $G$ \emph{violates} $\varphi$, denoted as $G \not\models \varphi$.
\end{definition}

For instance, the \gfd{} $(\texttt{(x)}\xleftarrow{\texttt{:hasMember}} \texttt{(z)} \xrightarrow{\texttt{:hasMember}} \texttt{(y)} \Rightarrow \set{\texttt{x.speaks} = \texttt{y.speaks}})$ enforces that two members of a same forum speak the same language.
Similarly, the \ggd{} $(\texttt{(z)} \xrightarrow{\texttt{:containerOf}} \texttt{(y)} \xrightarrow{\texttt{:hasCreator}} \texttt{(x)} \Rightarrow \texttt{(z)} \xrightarrow{\texttt{:hasMember}}\texttt{(x)})$ enforces that the creator of a message is a member of the forum where the message is posted.

Graph Entity Dependencies (\ged, see~\cite{DBLP:journals/tods/FanL19})  are not explicitly defined in this document, as they are subsumed by our parametric definition of \gfd.
Indeed, the difference between \gfd{} and \ged{} lies in the fact that \ged{} can compare vertex and edge identities, while \gfd{} can only test data values. 
In our framework, this difference is captured by the choice of predicates that are allowed in the query language. Thus, the original definition of \gfd{} corresponds to disallowing equality between object identifiers.

\pgkeys{} \cite{DBLP:conf/sigmod/AnglesBDFHHLLLM21} have been introduced to identify unique vertices, edges, and properties in a property graph.
The \emph{scope} selects the objects constrained by the \pgkey.
The \emph{descriptor} contains a \emph{key expression}, specifying how to obtain a \emph{key value}, and an \emph{assertion keyword} specifying the kind of constraint.
A typical \pgkey{} looks as follows in the syntax of \cite{DBLP:conf/sigmod/AnglesBDFHHLLLM21}.
\begin{equation}\label{eq:def-pgk-orig}
    \FOR \overbrace{x \WITHIN Q_s(x, \bar{y}) \WHERE C_s(x, \bar{y})}^{\text{scope}} ~\underset{\begin{array}{r}\uparrow\\[-5pt]\llap{\text{\scriptsize assertion keyword}}\end{array}}{\alpha}~ \overbrace{\underset{\begin{array}{@{}l@{}}\uparrow\\[-5pt]\rlap{\text{\scriptsize key expression}}\end{array}}{\bar{\nu}}~ \WITHIN Q_t(x, \bar{z}) \WHERE C_t(x, \bar{z})}^{\text{descriptor}}
\end{equation}
For the sake of uniformity with \gfd{} and \ggd{}, we write \pgkeys{} as:
\begin{equation}\label{eq:def-pgk}
    \big(S(x, \bar{y}) \Rightarrow \alpha(\bar{\nu}), T(x, \bar{z})\big)
\end{equation}
where $\bar{y}$ and $\bar{z}$ are disjoint tuples of variables, $x \not\in \bar{y} \cup \bar{z}$,
$S = (Q_s, C_s)$ and $T = (Q_t, C_t)$ are queries, and $\bar{\nu}$ is a key expression over variables~$x \cup \bar{z}$.
As for \ggd, we may omit patterns or conditions if they are trivial.
The key expression~$\bar{\nu}$ is a tuple consisting of elements of the form~$x$ or~$x.a$ with~$x\in\var$ and~$a\in\key$.
The assertion keyword~$\alpha$ is either $\MANDATORY$, $\SINGLETON$ or $\EXCLUSIVE$. Intuitively, it enforces that no two objects matched by $x$ can share the same key value ($\EXCLUSIVE$), or that all objects matched by $x$ have at least ($\MANDATORY$) or at most ($\SINGLETON$) one key value.
Satisfaction is defined as follows.

\begin{definition}[\pgkey{} Satisfaction]\label{d:pgk}
    Given a \pgkey{} $\varphi = \big(S(x,\bar{y}) \Rightarrow \alpha(\bar{\nu}), T(x,\bar{z})\big)$, we say that a graph~$G$ \emph{satisfies} $\varphi$ denoted $G \models \varphi$, if the conditions below on $\alpha$ hold:
    \begin{itemize}
        \item $\alpha =~ \MANDATORY$: for all matches~$h_s$ of~$S$, there exists a consistent match~$h_t$ of~$T$ such that~$h_t(\bar{\nu})$ is defined.
        \item $\alpha =~ \SINGLETON$: for all matches~$h_s$ of~$S$, and~$h_{t},h'_{t}$ of~$T$ consistent with~$h_s$, it holds that $h_{t}(\bar{\nu}) = h'_{t}(\bar{\nu})$.
        \item $\alpha =~ \EXCLUSIVE$: for all matches~$h_{s},h'_{s}$ of~$S$, and~$h_{t},h'_{t}$ of~$T$ respectively consistent with~$h_{s},h'_{s}$, and such that~$h_{t}(\bar{\nu}) = h'_{t}(\bar{\nu})$, it holds that $h_{s}(x) = h'_{s}(x)$.
    \end{itemize}
    Otherwise we say that~$G$ \emph{violates} $\varphi$, denoted as $G \not\models \varphi$.
\end{definition}

We illustrate \cref{d:pgk} with the following \pgkeys{}: \eqref{eq:ex:mandatory} 
enforces that all persons that have posted a message have an email address;
\eqref{eq:ex:singleton} enforces that a message has at most one creator;
\eqref{eq:ex:exclusive} enforces that no two distinct persons can have the same email.
\begin{gather}
    \label{eq:ex:mandatory}
    \big(x\texttt{:Forum} \xrightarrow{\texttt{:containerOf}} y\texttt{:Message} \xrightarrow{\texttt{:hasCreator}} z\texttt{:Person} \Rightarrow~ \MANDATORY(z.\texttt{email})\big)\\
    \label{eq:ex:singleton}
    \big(x\texttt{:Message} \Rightarrow~ \SINGLETON(y), x \xrightarrow{\texttt{:hasCreator}} y\big) \\
    \label{eq:ex:exclusive}
    \big(x\texttt{:Person} \Rightarrow~ \EXCLUSIVE(x.\texttt{email})\big)
\end{gather}
Aside from the difference in syntax (\eqref{eq:def-pgk-orig} vs \eqref{eq:def-pgk}), our presentation of \pgkeys{} differs slightly from the original definition in \cite{DBLP:conf/sigmod/AnglesBDFHHLLLM21} with respect to syntax restrictions.
This facilitates comparisons with other constraint languages and does not affect the expressive power. In \cite{DBLP:conf/sigmod/AnglesBDFHHLLLM21} a single \pgkey{} may have multiple assertion keywords, among which $\EXCLUSIVE$ must appear. First, making the $\EXCLUSIVE$ keyword mandatory does not affect the expressive power. Indeed, the \pgkey{} {$\big(S(x, \bar{y}) \Rightarrow~ \SINGLETON(\bar{\nu}), T(x, \bar{z})\big)$}, forbidden in \cite{DBLP:conf/sigmod/AnglesBDFHHLLLM21}, can be equivalently written as $\big(S(x, \bar{y}) \Rightarrow~ \highlight{\EXCLUSIVE\,}\SINGLETON(\highlight{x}, \bar{\nu}), T(x, \bar{z})\big)$, where the $\EXCLUSIVE$ keyword has no effect since the key expression now contains $x$ itself.
The same trick also works with $\MANDATORY$ instead of $\SINGLETON$.

Second, a \pgkey{} $\varphi$ with two assertion keywords $\alpha$ and $\beta$ can always be split into two identical $\pgkeys$ $\varphi_\alpha$ and $\varphi_\beta$, except that $\varphi_\alpha$ (resp.~$\varphi_\beta$) only contains the keyword $\alpha$ (resp.~$\beta$), which conforms to~\eqref{eq:def-pgk}. See \cref{a:p:split-pgk} in \cref{a:split-pgk-appendix} for more details.

\begin{remark}\label{unconstrained}
As can be seen in Equation~\ref{eq:def-pgk}, the scope and the descriptor of \pgkey{} share exactly one variable. Lifting this restriction leads to what we call \emph{unconstrained \pgkeys}, equivalent to GGD (Remark~\ref{r:npgk-is-ggd-cq-eq}) and the fragment targeted by the translations in \cite{DBLP:journals/is/ManouvrierB26}. 
\end{remark}

\begin{toappendix}
\label{a:split-pgk-appendix}
    \begin{property}
    \label{a:p:split-pgk}
        Let $\varphi = \big(S(x, \bar{y}) \Rightarrow \alpha(\bar{\nu}), T(x, \bar{z})\big)$ be a \pgkey{} over any query language. We assume~$\alpha \neq~ \EXCLUSIVE$. Then~$\varphi$ is equivalent to the set of \pgkeys{} $\{\psi_1,\psi_2\}$, given below.
        \begin{align}
            \psi_1 ={} & \big(S(x, \bar{y}) \Rightarrow \alpha(\bar{\nu}, x), T(x, \bar{z})\big) \\
            \psi_2 ={} & \big(S(x, \bar{y}) \Rightarrow~ \EXCLUSIVE(\bar{\nu}, x), T(x, \bar{z})\big)
        \end{align}
    \end{property}
\end{toappendix}

\section{Interesting Fragments and First Results}\label{sec:gen}
As can be seen in \eqref{eq:def-pgk}, the scope and descriptor of a \pgkey{} always share a single variable, whereas \gfd{} and \ggd{} do not have this restriction. In order to make a finer comparison of the expressiveness of the three languages, we define subclasses of \gfd{} and \ggd{} based on how many variables are shared between the left and the right sides. We will show that this parameter has an impact on the expressiveness of these constraint languages.
It will also be useful to consider $\mpgkeys$, the subclass of~$\pgkeys$ in which all constraints use the assertion keyword $\MANDATORY$.

\begin{definition}[$n\gfd$ and $n\ggd$]\label{d:ngfd}\label{d:nggd}
A $\gfd$ $(S(\bar{x}, \bar{y}) \Rightarrow C_t(\bar{x}))$ is in $n\gfd$ if $|\bar{x}|\leq n$.
Similarly, a $\ggd$ $(S(\bar{x}, \bar{y}) \Rightarrow T(\bar{x}, \bar{z}))$ is in $n\ggd$ if $|\bar{x}|\leq n$.
\end{definition}

For instance, $(\texttt{(x)} \xrightarrow{\texttt{:hasModerator}} \texttt{(y)}, \set{\texttt{x.name} = \texttt{Foo}} \Rightarrow \texttt{(z)} \xrightarrow{\texttt{:hasCreator}} \texttt{(y)})$  enforces that moderators of forum Foo must have already posted on some forum. The two sides of the implication have a single shared variable \texttt{y}, hence~$\varphi$ is a~$1\ggd{}$.
If one wants to enforce that the message is in the same forum, one could use the following $2\ggd$.
\begin{equation*}
    (\texttt{(x)} \xrightarrow{\texttt{:hasModerator}} \texttt{(y)}, \set{\texttt{x.name} = \texttt{Foo}} \Rightarrow \texttt{(y)} \xleftarrow{\texttt{:hasCreator}} \texttt{(z)} \xleftarrow{\texttt{:containerOf}} \texttt{(x)})
\end{equation*}
Note that, by sharing a single edge variable \texttt{e}, one could define the equivalent 1\ggd{} below.
\begin{equation*}
    (\texttt{(x)} \xrightarrow{\texttt{e:hasModerator}} \texttt{(y)}, \set{\texttt{x.name} = \texttt{Foo}} \Rightarrow \texttt{(y')} \xleftarrow{\texttt{:hasCreator}} \texttt{(z')} \xleftarrow{\texttt{:containerOf}} \texttt{(x')} \xrightarrow{\texttt{e}} \texttt{(y')})
\end{equation*}

First, let us show that the $n\gfd$ hierarchy collapses at the second level for equality and inequality predicates.

\begin{lemma}\label{l:split-gfd}
    $\gfd$ and $2\gfd$ are equivalent for $\cqeq$, $\cqneq$, $\crpqeq$, and $\crpqneq$.
\end{lemma}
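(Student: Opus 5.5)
The inclusion $2\gfd \subseteq \gfd$ holds by definition, so only the converse needs work: every \gfd{} is equivalent to a finite set of $2\gfd$s. Fix a \gfd{} $\varphi = (S(\bar{x}, \bar{y}) \Rightarrow C_t(\bar{x}))$ with $S=(Q,C_s)$ and $C_t$ a finite set of predicates. Each predicate of $C_t$ mentions at most two variables of $\bar{x}$ (forms $x=y$, $x.a=y.b$, $x\neq y$, $x.a\neq y.b$, $x.a=c$, $x.a\neq c$). The plan is to split $\varphi$ into one constraint per predicate. For each $p \in C_t$, let $\bar{x}_p$ be the variables of $\bar{x}$ occurring in $p$ ($|\bar{x}_p|\le 2$), and let $\bar{y}_p$ list the remaining variables of $\bar{x}$ together with $\bar{y}$. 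Define
\[
\varphi_p = \big(S(\bar{x}_p, \bar{y}_p) \Rightarrow \{p\}(\bar{x}_p)\big).
\]
This is the same source query with output variables redistributed, so it stays in $\cqeq$, $\cqneq$, $\crpqeq$ or $\crpqneq$, respectively. Only the head changes, and the head predicates already belonged to the language.

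Next I would check that $G\models\varphi$ if and only if $G\models\varphi_p$ for every $p\in C_t$. Since the target pattern of a \gfd{} is trivial, a consistent target match is just the restriction of $h_s$ to $\bar{x}$, and $G\models\varphi$ means exactly that every match $h_s$ of $S$ satisfies every $p\in C_t$. The target query has no extra variables, so there is no existential witness to choose, and satisfaction of a conjunction of predicates is pointwise (\cref{pred-satisfaction}). The matches of $S$ do not depend on how the variables are split into output and non-output ones. Hence the universal statement over matches and predicates commutes with the conjunction over $p$, and the equivalence follows. For $\cq$ versus $\crpq$ nothing changes, because the source pattern is untouched.

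The only real subtlety is the bookkeeping of the variable partition in \cref{d:ggd}, which requires $\bar{x},\bar{y},\bar{z}$ to be disjoint tuples. Moving unused shared variables into $\bar{y}_p$ keeps this disjointness. A degenerate case is $C_t=\emptyset$, or a head predicate that mentions no variables at all. In that case $\varphi$ is trivially satisfied, or is equivalent to a $0\gfd$, which is still a $2\gfd$. I expect no genuine obstacle: the argument is a direct syntactic splitting, and the main point is the observation that predicates are at most binary in the shared variables.
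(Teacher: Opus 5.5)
Your proposal is correct and follows the paper's own argument: split $\varphi = (S \Rightarrow \{d_1,\dots,d_t\})$ into the set $\{(S \Rightarrow \{d_1\}),\dots,(S \Rightarrow \{d_t\})\}$, one constraint per head predicate. Your explicit redistribution of output variables, so that only the at most two variables of each predicate stay shared, is the bookkeeping the paper's sketch leaves implicit. The paper relies on that same normal form later, when it assumes $\varphi = (S(x,y,\bar{z}) \Rightarrow C_t(x,y))$ with $|C_t|=1$.
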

\begin{proof}[Proof Sketch]
    Any \gfd{}~$\varphi = (S \Rightarrow \set{d_1, \dots, d_t})$ 
    is equivalent to the following set of $2\gfds$~$\set{(S \Rightarrow \set{d_1}), \dots, (S \Rightarrow \set{d_t})}$.
\end{proof}

Next, we establish that, unlike $n\gfd$, the $n\ggd$ hierarchy is strict: given a number~$n$ of variables, we will show that we can construct two graphs that differ only by the presence or the absence of a common out-neighbor for every vertices of a given cycle whose size depends on~$n$. This feature turns out to require the information of~$n$ variables to be remembered, i.e. both graphs are indistinguishable if we forget a variable.

\begin{figure}[t]
    \centering
    \subfloat[$Q_{m\text{-cycle}}$]{%
        \label{fig:cq-m-cycle}%
        \begin{tikzpicture}[
    dot/.style={fill=black,circle,minimum width=1mm, inner sep=0pt, outer sep=1mm},
    every edge/.style = {
        draw = black,
        ->,
    }]

    \path (0,0)++(180:12.5mm) node[dot] (z1) {};
    \path (0,0)++(135:12.5mm) node[dot] (z2) {};
    \path (0,0)++(90:12.5mm) node[dot] (z3) {};
    \path (0,0)++(45:12.5mm) node[dot] (z4) {};
    \path (0,0)++(0:12.5mm) node[dot] (z5) {};
    \path (0,0)++(-90:12.5mm) node[dot] (z7) {};
    \path (0,0)++(-135:12.5mm) node[dot] (z8) {};

    \path[draw,->,red] (z1) to node[left] {$x_1$} (z2);
    \path[draw,->] (z2) to (z3);
    \path[draw,->,red] (z3) to node[above] {$x_2$} (z4);
    \path[draw,->] (z4) to (z5);
    \path[bend left] (z5) to node[sloped] {$\cdots$} (z7);
    \path[draw,->,red] (z7) to node[below] (xm) {$x_m$} (z8);
    \path[draw,->] (z8) to (z1);
\end{tikzpicture}}%
    \hspace*{15mm}%
    \subfloat[$Q_{m\text{-middle}}$]{%
        \label{fig:cq-m-middle}%
        \begin{tikzpicture}[
    dot/.style={fill=black,circle,minimum width=1mm, inner sep=0pt, outer sep=1mm},
    every edge/.style = {
        draw = black,
        ->,
    }]

    \path (0,0)++(180:12.5mm) node[dot] (z1) {};
    \path (0,0)++(135:12.5mm) node[dot] (z2) {};
    \path (0,0)++(90:12.5mm) node[dot] (z3) {};
    \path (0,0)++(45:12.5mm) node[dot] (z4) {};
    \path (0,0)++(0:12.5mm) node[dot] (z5) {};
    \path (0,0)++(-90:12.5mm) node[dot] (z7) {};
    \path (0,0)++(-135:12.5mm) node[dot] (z8) {};
    \node (t) at (0,0) {t};

    \path[draw,->,red] (z1) to node[left] {$x_1$} (z2);
    \path[draw,->] (z2) to (z3);
    \path[draw,->,red] (z3) to node[above] {$x_2$} (z4);
    \path[draw,->] (z4) to (z5);
    \path[bend left] (z5) to node[sloped] {$\cdots$} (z7);
    \path[draw,->,red] (z7) to node[below] (xm) {$x_m$} (z8);
    \path[draw,->] (z8) to (z1);
    \foreach \i in {1,2,3,4,5,7,8} {
        \path[draw,->] (z\i) to (t);
    }
\end{tikzpicture}}
     \caption{Queries for the Proof of \cref{t:nggd-sub-mggd}.}
     \label{fig:cq-m-cycle'}
\end{figure}

\begin{theorem}
    \label{t:nggd-sub-mggd}
    Let $n<m$. Then $n\ggd \subsetneq m\ggd$ for~$\cqeq$,~$\cqneq$, $\crpqeq$, and~$\crpqneq$.
\end{theorem}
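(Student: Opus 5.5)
The inclusion $n\ggd \subseteq m\ggd$ is immediate from \cref{d:nggd}, since a constraint with $|\bar{x}|\leq n$ also has $|\bar{x}|\leq m$. All the work is in the separation. Because $m\ggd$ grows with $m$, it suffices to give, for each $n$, a single $(n+1)\ggd$ that no finite set of $n\ggd$s expresses. I will build it from the data-free, label-free queries of \cref{fig:cq-m-cycle'}, taking $m=n+1$:
\[
\varphi_m = \big(Q_{m\text{-cycle}}(x_1,\dots,x_m) \Rightarrow Q_{m\text{-middle}}(x_1,\dots,x_m,t)\big).
\]
It shares exactly the $m$ edge variables $x_i$, which fix every second edge of a $2m$-cycle. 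It says that the vertices of every $2m$-cycle have a common out-neighbour $t$. Both sides are plain \cq{}s with no predicates, so the same witness works for $\cqeq$, $\cqneq$, $\crpqeq$ and $\crpqneq$. The real burden is to make the indistinguishability argument robust to the richest language, $\crpqneq$, on the $n\ggd$ side.

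The separation follows the usual two-graph scheme. I will construct a graph $G_{\mathrm{bad}}\not\models\varphi_m$ and show that every $n\ggd$ $\psi$ that fails in $G_{\mathrm{bad}}$ also fails in some graph satisfying $\varphi_m$. Then any set $\Sigma$ of $n\ggd$s equivalent to $\varphi_m$ would be violated by that graph, a contradiction. Concretely, $G_{\mathrm{bad}}$ is a directed $2m$-cycle $C$ together with auxiliary sink vertices $t_U$. For every set $U$ of cycle vertices that can be ``seen'' through $n$ shared objects, $t_U$ receives an edge from every vertex of $U$, and $U$ is always a proper subset of $C$. So $C$ itself has no common out-neighbour and $\varphi_m$ fails. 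The good graph is $G_{\mathrm{good}} = G_{\mathrm{bad}}$ plus a fresh vertex $t$ with an edge from every vertex of $C$. Since the $t$'s are sinks, every $2m$-cycle of $G_{\mathrm{good}}$ winds around $C$, and $t$ witnesses the head, so $G_{\mathrm{good}}\models\varphi_m$.

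Now let $\psi=(S(\bar{x},\bar{y})\Rightarrow T(\bar{x},\bar{z}))$ with $|\bar{x}|\leq n$, and let $h_s$ be a violating match in $G_{\mathrm{bad}}$. Since $G_{\mathrm{bad}}\sqsubseteq_{\mathrm{ind}}G_{\mathrm{good}}$ and object identities are preserved, $h_s$ is still a match in $G_{\mathrm{good}}$, including its inequality predicates. If $G_{\mathrm{good}}\models\psi$, there is a consistent head match $h_t$ in $G_{\mathrm{good}}$. I will then map $h_t$ back into $G_{\mathrm{bad}}$ with a map $r$ that fixes everything in $G_{\mathrm{bad}}$ and sends $t$ (and its incoming edges) to a suitable $t_U$. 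The choice of $U$ uses the fact that the $n$ shared variables pin down at most $n$ objects, and $r$ must be the identity on their images. That contradicts the choice of $h_s$.

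The main obstacle is this pull-back, and I expect the exact design of the auxiliary vertices to be dictated by it. There are three difficulties. First, a shared variable may be a path variable in the \crpq{} case, and its image may traverse the entire cycle, even though it is only compared by identity. Second, $r$ must stay a homomorphism on every edge of $h_t$ entering $t$, not only on the shared part, so $t_U$ must receive edges from all cycle vertices that $h_t$ actually links to $t$. Third, inequality predicates in $T$ may be broken if $r$ identifies $t$ with an object already used by $h_t$. My first attempt will index the auxiliary vertices by which shared objects are ``frozen''. If the pure retraction fails, I will duplicate the cycle (several copies of $C$ sharing the $t_U$'s) so that $h_t$ can be rerouted to a copy avoiding the frozen objects, and I will add enough distinct copies of each $t_U$ that an unused one is always available to preserve inequalities.
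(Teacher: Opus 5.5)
Your overall scheme is the paper's: the same witness $\varphi_m=(Q_{m\text{-cycle}}\Rightarrow Q_{m\text{-middle}})$, a bad graph whose distinguished $2m$-cycle has no common out-neighbour, a good graph obtained by adding one, and a pull-back of head matches. However, the bad graph you propose cannot work, and no choice of the sets $U$ repairs it. Consider $\psi=(Q_{m\text{-cycle}}(x_1)\Rightarrow Q_{m\text{-middle}}(x_1))$, i.e.\ $\varphi_m$ with only the edge variable $x_1$ shared (sharing nothing works too). It is a $1\ggd$, hence an $n\ggd$. It is also a consequence of $\varphi_m$, so it holds in every graph satisfying $\varphi_m$.

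In your $G_{\mathrm{bad}}$, however, all auxiliary vertices are sinks. So every match of the $2m$-cycle pattern, in the body or in the head, winds exactly once around $C$. The head then needs a common out-neighbour of all of $C$, which no vertex of $G_{\mathrm{bad}}$ provides, so $G_{\mathrm{bad}}\not\models\psi$. Your key claim, that every $n\ggd$ failing in $G_{\mathrm{bad}}$ also fails in some model of $\varphi_m$, is therefore false. This is your second difficulty in its extreme form: $h_t$ may use all $2m$ edges into $t$, so no map that moves only $t$ is a homomorphism. Your fallback, copies of $C$ sharing only the $t_U$'s, fails for the same reason: the vertices of the bad copy still lie on no other $2m$-cycle.

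The missing idea is that $G_{\mathrm{bad}}$ must satisfy every $n\ggd$ consequence of $\varphi_m$. Through any set of fewer than $2m$ pinned vertices of the bad cycle there must pass another $2m$-cycle that does have a common out-neighbour. Correspondingly, the pull-back must move an unpinned cycle vertex together with $t$.

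The paper achieves this by density rather than sparse gadgets. Besides the cycle $v_1\to\cdots\to v_{2m}\to v_1$, its bad graph $G_1$ has large sets $S_1,\dots,S_{2m}$, of size depending on $\Psi$. Every vertex of $S_j$ is adjacent in both directions to every vertex except $v_j$ and itself. The $n<m$ shared objects pin at most $2n<2m$ endpoints, so some $v_j$ is free. A head match in $G_2=G_1+\omega$ is then pulled back by sending $v_j$ to a fresh twin $v_j'\in S_j$ and $\omega$ (your $t$) to a fresh $\omega'\in S_j$, patching paths only at their ends. Freshness preserves identifier (in)equalities.

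Density is also what makes $G_2\models\varphi_m$. A closed walk of length $2m$ whose vertex set is not exactly $\{v_1,\dots,v_{2m}\}$ misses some $v_j$ and is dominated by a vertex of $S_j$; the remaining closed walks are dominated by $\omega$. A sparser twin construction would instead create mixed cycles through several twins that have no common out-neighbour even in the good graph.
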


\begin{proof}
    
    Let $n<m$, and $Q_{m\text{-cycle}}$ and $Q_{m\text{-middle}}$ be the $\cq$s represented in \cref{fig:cq-m-cycle'}.
    $Q_{m\text{-cycle}}$ is a cycle of length~$2m$ in which every other edge (in red in the figure) bears an output variable ($x_1$ to $x_m$).
    $Q_{m\text{-middle}}$ is a copy of $Q_{m\text{-cycle}}$ augmented with an output vertex variable~$t$, which is the target of an edge from every vertex of the cycle.
    Let $\varphi$ be the $m\ggd$ $\varphi = (Q_{m\text{-cycle}} \Rightarrow Q_{m\text{-middle}})$. We show that no set of $n\ggd$ $\Psi$ can express $\varphi$. 
    To do so, we proceed by contradiction.
    From~$\Psi$, we construct two graphs $G_1$ and $G_2$ such that $G_1 \not \models \varphi$ and $G_2 \models \varphi$.
    We then show $G_2 \models \Psi \implies G_1 \models \Psi$, that is, $\Psi$ cannot accept $G_2$ without also accepting $G_1$.

    \underline{\bf Construction of $G_1$ and $G_2$.} Let $k = 2m + \max\limits_{\psi \in \Psi}(|\psi|)+2$. Let $S_1,\ldots,S_{2m}$ be disjoint sets of vertices of size $k$. Let $v_1,\ldots,v_{2m}$ be vertices that do not occur in $S_1,\ldots,S_{2m}$. $G_1$ is a graph without data, with vertices $S_1\cup\ldots\cup S_{2m}\cup\set{v_1,\ldots,v_{2m}}$ and edges such that:
    \begin{itemize}
        \item $S_1\cup\ldots\cup S_{2m}$ is a clique without self-loops.
        \item For each~$i\in [2m]$, $S_i\cup \set{v_j \mid j\neq i}$ is a clique without self-loops.
        \item We add an edge~$e_i$ for each $i < 2m$ defined by $v_i\xrightarrow{e_i} v_{i+1}$; and one edge $v_{2m} \xrightarrow{e_{2m}} v_1$.
    \end{itemize}
    Remark that, as intended, $G_1 \not\models \varphi$. Indeed, $h : x_i \mapsto e_{2i}$ is a match of $Q_{m\text{-cycle}}$, but there is no match of $Q_{m\text{-middle}}$ that is consistent with $h$. Indeed, there is no suitable choice for $t$, as vertices in $S_i$ lack an edge to $v_i$ and vertices $v_i$ lack a self-loop.

    $G_2$ is defined as a copy of $G_1$ with an additional fresh vertex $\omega$ and edges $v_i \rightarrow \omega$ for all $i \in [2m]$.
    Notice that $G_2 \models \varphi$.
    Indeed, let $h$ be a match of $Q_{m\text{-cycle}}$. First, remark that no edge adjacent to $\omega$ can be in the image of $h$, as those edges are part of no cycles.
    Let~$V$ be the set of the endpoints of the edges in $h(\set{x_1,\ldots,x_m})$.
    If~$V = \set{v_1,\ldots,v_{2m}}$, then the consistent match $h'$ such that $h'(t) = \omega$ is a suitable choice for $Q_{m\text{-middle}}$.
    Otherwise, there exists $j$ such that $v_j\notin V$. Then, let $v$ be any vertex in $S_j \setminus V$; it exists because $S_j$ is of size $k > 2m$.
    Note that $v$ is a neighbor of every vertex in $G_2$ but $v_j$ and $\omega$,
    hence the consistent match $h'$ such that $h'(t) = v$ is a suitable choice for $Q_{m\text{-middle}}$. 
    
    \underline{\bf Proof that $G_2\models \Psi \implies G_1 \models \Psi$.} 
    We now assume that~$\Psi$ accepts~$G_2$ and show that it then also accepts~$G_1$.
    Let $\psi = (S(\bar{x}, \bar{y}) \Rightarrow T(\bar{x}, \bar{z}))$ be an $n\ggd$ in $\Psi$; in particular $G_2 \models \psi$.
    We write~$\bar{x} = (x_1,\ldots,x_n)$.

    If there is no match of $S$ over $G_1$, then $G_1 \models \psi$ trivially holds.
    Henceforth, we assume there is a match $h$ of $S$ over $G_1$. Then $h$ is also a match of $S$ over $G_2$ and since~$G_2\models\psi$, $T$ is satisfied over $G_2$ by some match $h_2$ that is consistent with $h$. 

    We let~$V$ denote the set of vertices (of~$G_1$) containing all vertices in~$h(\{x_1,\ldots,x_n\})$ and all vertices that are adjacent to an edge in~$h(\{x_1,\ldots,x_n\})$.
    Note that~$|V|<2m$ hence there is some~$j$ such that~$v_j\notin V$.
    Let~$\omega'$ and~$v_j'$ be two distinct vertices in~$S_j$ that are not in~$\im(h_2)$;
    two such vertices can be found due to the definition of~$k$.
    \begin{enumerate}
        \item For each vertex variable~$\xi\in \var_T$ such that~$h_2(\xi)\notin\{\omega,v_j\}$,
        we set~$h_1(\xi)=h_2(\xi)$.
        \item For each vertex variable~$\xi\in \var_T$ such that~$h_2(\xi)=\omega$,
        we set~$h_1(\xi)=\omega'$.
        \item For each vertex variable~$\xi\in \var_T$ such that~$h_2(\xi)=v_j$,
        we set~$h_1(\xi)=v_j'$.
        \item For each path variable~$e\in \var_T$ from $\xi_s$ to $\xi_t$, $h_2(e)$ is a path from $h_2(\xi_s)$ to $h_2(\xi_t)$ in $G_2$. Let $u_1$ be the first vertex of $h_2(e)$ and $u_2,u_3$ be the last two vertices of $h_2(e)$. We construct $h_1(e)$ as a copy of $h_2(e)$ with possible replacements for $u_1$, $u_2$ and $u_3$. First, remark that if $\omega$ occurs in $h_2(e)$, then $u_3 = \omega$, as $\omega$ has no outgoing edges. Then:
        \begin{itemize}
            \item If $u_2 \neq v_j$ and $u_3 = \omega$, we replace the last vertex with $\omega'$. This produces a valid path of $G_1$, as $\omega'$ is adjacent to every vertex of $G_1$ but $v_j$.
        
            \item If $u_2 = v_j$ and $u_3 = \omega$, we replace these vertices with $v'_j$ and $\omega'$ respectively. This produces a valid path of $G_1$, as $v'_j$ is adjacent to every vertex of $G_1$ but $v_j$. Specifically, since $v_j$ has no self-loop, the possible predecessor of $u_2$ in $h_2(e)$ is not $v_j$.
        
            \item If $u_1 = v_j$, we replace the first vertex with $v_j'$. Again, this is well defined, as the successor of $u_1$ cannot be $v_j$. The case where the successor of $u_1$ is $\omega$ corresponds to a path of length 1 and is thus included in the previous case.
        \end{itemize}
        Note that this procedure preserves inequalities of paths: $h_2(e) \neq h_2(e')$ if and only if $h_1(e) \neq h_1(e')$. Indeed, since $\omega'$ and $v_j'$ are not in $\im(h_2)$, no path in $\im(h_2)$ starts with $u_1 = v_j'$ nor ends with $u_3 = \omega'$. Similarly, no path in $\im(h_2)$ ends with $u_2 = v_j'$ and $u_3 = \omega$, as there is no edge from $v_j'$ to $\omega$.
    \end{enumerate}
    It can be verified that all conditions in $T$ are satisfied by~$h_1$ because they are satisfied by~$h_2$: there is no data and identifier equality/inequality constraints are preserved.
\end{proof}

We conclude Section~\thesection{} by showing that $\gfd$ is distinct from the other classes, since it is the only one that is closed under induced subgraphs.

\begin{propositionrep}[Closure under induced subgraph]
    \label{p:closure-induced-subgraph}
    The following statements hold for~$\cqeq$, $\cqneq$, $\crpqeq$, and~$\crpqneq$.
    \begin{enumerate}
        \item \label{p:closure-induced-subgraph:gfd}
        $\gfd$ is closed under induced subgraph: let $\varphi$ be a $\gfd$ and $G$ be a graph. If $G \models \varphi$, then for all $G' \sqsubseteq_\text{ind} G$, $G' \models \varphi$.
    \item  \label{p:closure-induced-subgraph:exist} $1\ggd$ (resp.~\mpgkeys{}) is \emph{not} closed under induced subgraph: there exists a $1\ggd$ (resp.~\mpgkey{}) $\psi$ and two graphs $G, G'$ such that $G' \sqsubseteq_\text{ind} G$, $G \models \psi$ and $G' \not\models \psi$.
    \item \label{p:closure-induced-subgraph:sep} $\gfd \neq \ggd$, $1\ggd \not\subseteq \gfd$, $\mpgkeys \not\subseteq \gfd$, and $\gfd \neq \pgkeys$
    \end{enumerate}
\end{propositionrep}
\begin{proof}[Proof of 1.]
    Let $\varphi = (S(\bar{x}, \bar{y}) \Rightarrow C_t(\bar{x}))$ be a $\gfd$, and $G, G'$ be graphs such that $G \models \varphi$ and $G' \sqsubseteq_\text{ind} G$.
    
    Let us suppose that there exists a match $h$ of $S$ over $G$.
    If there does not exist a match $h'$ of the pattern of $S$ over $G'$, then there is nothing more to verify, and if there exists such a match, then $h' \models S$ and $h' \models C_t$ by definition of $\lambda'$ and $\pi'$.
    
    Now, let us suppose that there exists no match of $S$ over $G$. If there exists a match of the pattern of $S$ over $G$ that does not satisfy the conditions of $S$, then there exists no match of $S$ over $G'$ for the same reason, and if there exists no match of the pattern of $S$ 
    over $G$, then there exists no match of that pattern over $G'$ by monotonicity of~$\cqs$ and~$\crpqs$.
    In all cases: $G' \models \varphi$.
\end{proof}
\begin{appendixproof}[Proof of 2.]
    Let $\psi = (Q_s(\bar{x}, \bar{y}), \emptyset \Rightarrow Q_t(\bar{x}, \bar{z}), \emptyset)$ be a $\ggd$,
    and $G$ be a graph with exactly two connected components $\Gamma_s$ and $\Gamma_t$
    such that there exists a match $h_s$ of $Q_s$ over $\Gamma_s$ and a match $h_t$ of $Q_t$ over $\Gamma_t$, but there exists no match of $Q_t$ over $\Gamma_s$.
    If $G' = \Gamma_s$, then $G'$ is indeed an induced subgraph of $G$, but $G' \not\models \psi$, while $G \models \psi$.
    
    We prove the statement similarly for $\mpgkeys$, which immediately yields the result for $\pgkeys$.
\end{appendixproof}
\begin{toappendix}
    Item \enumstyle{3} follows from items \enumstyle{1} and \enumstyle{2}.
\end{toappendix}

\section{Queries with Equality}\label{sec:cq:eq}
In this section, we study $\crpqeq$. The results are summarized in \cref{thm:sec4} and \cref{fig:venn-eq}, with all inclusions shown in the figure being strict. Section~\ref{ssec:inclusions-eq} establishes the inclusions via direct translations, while Section~\ref{ssec:separations-eq} proves their strictness.

\subsection{Inclusions}
\label{ssec:inclusions-eq}

Some inclusions follow directly from syntactic restrictions, such as $\gfd \subseteq \ggd$, $1\ggd \subseteq \ggd$, and $\mpgkeys \subseteq \pgkeys$.
Moreover, $1\ggd$ and \mpgkeys{} are easy to translate into one another, as stated next.

\begin{propositionrep}
    \label{p:mpgk-is-1ggd-cq-eq}
    For~$\crpqeq$, $\mpgkeys$ and $1\ggd$ are equivalent.
\end{propositionrep}
\begin{proof}
    The proof is a simple syntactic rewriting of the $1\ggd$ into an $\mpgkey$.
    
    Let $\varphi = (S(x, \bar{y}) \Rightarrow T(x, \bar{z}))$ be a $1\ggd$.
    Then $\varphi$ is equivalent to the $\mpgkey$~$\psi$, below.
    \begin{equation}
        \psi =~ \big(S(x, \bar{y}) \Rightarrow~ \MANDATORY(x, \bar{z}), T(x, \bar{z})\big)
    \end{equation}

    Indeed, $\varphi$ and $\psi$ both enforce that, if there is a match of $S$, then there is a match of $T$. Conversely, if $\psi$ is the following \pgkey:
    \begin{equation*}
        \psi =~ \big(S(x, \bar{y}) \Rightarrow~ \MANDATORY(\bar{\nu}), T(x, \bar{z})\big)
    \end{equation*}

    Then $\psi$ is equivalent to $\varphi = \big(S(x, \bar{y}) \Rightarrow T(x, \bar{z}) \land \ex(\bar{\nu})\big)$.
\end{proof}

It follows immediately from \cref{p:mpgk-is-1ggd-cq-eq} that $1\ggd \subseteq \pgkeys$.
Furthermore, $\ggd$ with more shared variables can simulate the $\SINGLETON$ and $\EXCLUSIVE$ keywords by using two disjoint copies of the scope and descriptor of the \pgkey{}.

\begin{proposition}
    \label{p:pgk-sub-ggd-cq-eq}
    For~$\crpqeq$, we have that $\pgkeys \subseteq \ggd$.
\end{proposition}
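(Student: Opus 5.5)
We translate each \pgkey{} into a single \ggd{} by conjoining disjoint copies of its scope and descriptor in the body. By \cref{a:p:split-pgk} and the discussion after \cref{d:pgk}, every \pgkey{} may be assumed to carry a single assertion keyword $\alpha$, so it suffices to handle $\varphi = \big(S(x,\bar y) \Rightarrow \alpha(\bar\nu), T(x,\bar z)\big)$ for each $\alpha$ separately. For $\alpha = \MANDATORY$ there is nothing new to prove: by \cref{p:mpgk-is-1ggd-cq-eq}, $\varphi$ is equivalent to the $1\ggd$ $\big(S(x,\bar y) \Rightarrow T(x,\bar z)\land \ex(\bar\nu)\big)$, which is in particular a \ggd{}. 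For the two other keywords we first fix a renaming. Let $T'(x,\bar z')$ be a copy of $T$ in which every variable except $x$ is replaced by a fresh one, and let $\bar\nu'$ be $\bar\nu$ under the same renaming. Similarly, let $S'(x',\bar y')$ and $T''(x',\bar z'')$ be fully fresh copies of $S$ and $T$, and let $\bar\nu''$ be the corresponding copy of $\bar\nu$.

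\emph{Case $\SINGLETON$.} We propose the \ggd{}
\[
\big(S(x,\bar y)\land T(x,\bar z)\land T'(x,\bar z'),\ \ex(\bar\nu,\bar\nu') \Rightarrow \set{\nu_i=\nu'_i \mid i}\big),
\]
where the shared (output) variables are $x$ together with all variables occurring in $\bar\nu$ and $\bar\nu'$, and the head has a trivial pattern. The only predicates used are equalities $u=u'$ or $u.a=u'.a$, so this stays within $\crpqeq$. To check equivalence, note that a match of the body is exactly a triple $(h_s,h_t,h'_t)$ of matches consistent on $x$. The head then asserts $h_t(\bar\nu)=h'_t(\bar\nu)$ componentwise, which is precisely the $\SINGLETON$ condition of \cref{d:pgk}.

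\emph{Case $\EXCLUSIVE$.} We propose
\[
\big(S(x,\bar y)\land T(x,\bar z)\land S'(x',\bar y')\land T''(x',\bar z''),\ \set{\nu_i=\nu''_i\mid i} \Rightarrow \set{x=x'}\big),
\]
with shared variables $x$ and $x'$ and a trivial head pattern. The argument is symmetric to the previous case: a body match is a pair of scope matches $h_s,h'_s$, each extended by a consistent descriptor match, whose key values coincide; the head then requires $h_s(x)=h'_s(x)$. Since a set of \pgkeys{} translates constraint by constraint, this yields $\pgkeys\subseteq\ggd$.

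The main obstacle is the treatment of \emph{undefined} key values. The body predicates $\nu_i=\nu''_i$, and the head predicates $\nu_i=\nu'_i$, are only satisfied when the corresponding properties exist, whereas \cref{d:pgk} compares $h_t(\bar\nu)$ with $h'_t(\bar\nu)$ as possibly partial tuples. I would adopt the reading consistent with the $\MANDATORY$ clause: a key value $h_t(\bar\nu)$ counts only when it is defined, and two key values are compared only when both are defined. Under this reading, the $\ex(\bar\nu,\bar\nu')$ guard in the $\SINGLETON$ body and the equalities in the $\EXCLUSIVE$ body exactly match the semantics. If instead the intended semantics treated ``both undefined'' as equal, the $\SINGLETON$ translation would need a case split according to which properties of $\bar\nu$ exist. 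That split is problematic because negated existence is not expressible in $\crpqeq$. So the first thing I would do is pin down this convention and argue that it is the intended one. The remaining verifications are routine unfoldings of the definition of consistent matches.
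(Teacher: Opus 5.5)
Your proposal is correct and matches the paper's proof essentially verbatim. The paper uses the same three translations: the $\MANDATORY$ case via the $1\ggd$ rewriting, $\SINGLETON$ via two copies of $T$ sharing $x$ with the guard $\ex(\bar\nu,\bar\nu')$ in the body and $\bar\nu=\bar\nu'$ in the head, and $\EXCLUSIVE$ via two fully renamed copies of $S\land T$ with $\bar\nu=\bar\nu'$ in the body and $x=x'$ in the head. Your convention on undefined key values is the one the paper implicitly adopts: its $\SINGLETON$ translation carries the same $\ex$ guard, and its proof of \cref{p:gfd-sub-pgk-cq-eq} reads $\SINGLETON$ as constraining only defined keys, so no case split is needed.
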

\begin{proof}
    Let $\varphi = \big(S(x, \bar{y}) \Rightarrow \alpha(\bar{\nu}), T(x, \bar{z})\big)$ be a \pgkey.
    We denote by $\bar{\nu}'$ the copy of $\bar{\nu}$ in which each variable $t$ that occurs in $\bar{\nu}$ is replaced by $t'$.
    Then $\varphi$ is equivalent to one of the following $\ggd$, depending on the assertion keyword $\alpha$:
    \begin{align*}
        \EXCLUSIVE:{}& \big(S(x, \bar{y}) \land S(x', \bar{y}') \land T(x, \bar{z}) \land T(x', \bar{z}') \land \set{\bar{\nu} = \bar{\nu}'} \Rightarrow \set{x = x'}\big) \\
        \SINGLETON:{}& \big(S(x, \bar{y}) \land T(x, \bar{z}) \land T(x, \bar{z}') \land \ex(\bar{\nu}, \bar{\nu}') \Rightarrow \set{\bar{\nu} = \bar{\nu}'}\big) \\
        \MANDATORY:{}& \big(S(x, \bar{y}) \Rightarrow T(x, \bar{z}) \land \ex(\bar{\nu})\big)
    \end{align*}
    
    It remains to remark that these $\ggds$ do indeed simulate the behavior of assertion keywords as defined in \cref{d:pgk}.
    For instance, the formula for $\EXCLUSIVE$ enforces that when two matches agree on the key expression ($\bar \nu = \bar \nu'$), then they must also agree on the variable in the scope ($x = x'$).
\end{proof}

\begin{remark}
    The translations of $\EXCLUSIVE$ and $\SINGLETON$ are actually in $\gfd$; the translation of $\MANDATORY$ is a $1$\ggd{}. Thus, we proved more precisely that $\pgkeys \subseteq 2\ggd$.
\end{remark}

If one allows equality only with a constant ($\doteqc$), an adaptation of \cref{l:split-gfd} shows that $\gfd=1\gfd$.
The next statement follows.

\begin{proposition}
    \label{p:gfdc-sub-ggd-cq-eq}
    $\gfd$ for $\crpqeqc$ are expressible in $1\ggd$ for $\crpqeq$.
\end{proposition}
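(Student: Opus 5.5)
Given a \gfd{} $\varphi = (S(\bar{x},\bar{y}) \Rightarrow C_t(\bar{x}))$ over $\crpqeqc$, I first split it as in \cref{l:split-gfd} into the set $\set{(S \Rightarrow \set{d}) \mid d \in C_t}$. In $\crpqeqc$ every predicate has the form $u.a = c$, so each piece mentions a single variable $u$. Moving all other variables of $\bar{x}$ into $\bar{y}$ therefore yields an equivalent $1\gfd$ $(S(u,\bar{y}') \Rightarrow \set{u.a = c})$. This is the adaptation of \cref{l:split-gfd} announced just before the statement. It remains to express a single such $1\gfd$ as a $1\ggd$ over $\crpqeq$.

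The natural translation keeps $S$ and $u$ as the shared variable, and takes as head a copy of $u$ carrying the condition. Concretely, I set $T(u) = (Q_u, \set{u.a = c})$. Here $Q_u$ is a minimal \crpq{} whose only variable is $u$: if $u$ is a vertex variable, $Q_u$ has one vertex variable $u$ with trivial label; if $u$ is an edge variable, $Q_u$ has an edge $u$ between two fresh vertex variables. The condition $u.a = c$ is an equality predicate, so $T$ belongs to $\crpqeq$, and so does $S$, since $\crpqeqc \subseteq \crpqeq$. Since $T$ shares only $u$ with $S$, the resulting $\ggd$ $\psi = (S(u,\bar{y}') \Rightarrow T(u,\bar{z}))$ is a $1\ggd$.

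For correctness I will check that, for any match $h_s$ of $S$, a match of $T$ consistent with $h_s$ exists exactly when $\pi(h_s(u),a) = c$. The consistent match fixes $u \mapsto h_s(u)$; for an edge $u$, its endpoints are mapped to the real endpoints of $h_s(u)$, and the label of that edge matches the label disjunction copied from $S$. Hence $G \models \psi$ if and only if $G \models (S \Rightarrow \set{u.a=c})$.

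The main obstacle is the case where the shared variable $u$ is a general path variable rather than an edge variable. Conditions on properties are only allowed on vertex or edge variables, however, so $u.a$ forces $u$ to be a vertex or edge variable, and the case does not arise. One formal subtlety remains: the predicate $u.a = c$ must be satisfiable as a head condition whenever it mentions only the shared variable. I will simply check that the definitions of query and match allow this, after which the conjunction over all pieces gives a set of $1\ggd$ equivalent to $\varphi$.
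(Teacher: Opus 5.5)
Your proof is correct and follows the paper's route. You split the \gfd{} into single-predicate pieces as in \cref{l:split-gfd}, observe that each $\doteqc$ predicate $u.a=c$ mentions only one variable so each piece is a $1\gfd$, and then read that piece as a $1\ggd$ whose head is a minimal pattern on $u$ carrying $\set{u.a=c}$; this is the same form $(S \Rightarrow (x), \set{x.a=c})$ the paper uses in the proof of \cref{p:gfd-sub-1ggq-nq-neq}, and your explicit head pattern is slightly more careful than the paper's one-line remark but changes nothing essential.
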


Finally, we establish the most surprising result of Section~\thesection{}: despite having only one shared variable, $\pgkeys$ can simulate $\gfd$ through a clever use of the $\SINGLETON$ keyword.

\begin{propositionrep}\label{p:gfd-sub-pgk-cq-eq}
    For~$\crpqeq$, we have that $\gfd\subseteq\pgkeys$.
\end{propositionrep}
\begin{proofsketch}
    Let~$\varphi$ be a \gfd. Using the same argument as in the proof of \cref{l:split-gfd}, we may assume that $\varphi$ has the form $(S(x, y, \bar{z}) \Rightarrow C_t(x,y))$ with $|C_t| = 1$. We consider only the case where $C_t = \set{x.a = y.b}$. Other cases are similar; see \cref{a:p:gfd-sub-pgk-cq-eq}.

    Let~$x'$ and~$y'$ be fresh variables, and~$\bar{z}'$ be a tuple of fresh variables with~$|\bar{z}'|=|\bar{z}|$.
    Then~$\varphi$ is equivalent to the set of \pgkeys{} $\set{\psi_1, \psi_2, \psi_3}$ defined as follows.
    \begin{align}
        \psi_1 ={} & \big(S(\highlight{x}, y', \bar{z}') \Rightarrow \highlight{\MANDATORY(x)}, S(x, y, \bar{z}) \highlight{\land \set{x.a = y.b}}\big) \\
        \psi_2 ={} & \big(S(x', \highlight{y}, \bar{z}') \Rightarrow \highlight{\MANDATORY(y.b)}\big) \\
        \psi_3 ={} & \big(S(\highlight{x}, y', \bar{z}') \Rightarrow \highlight{\SINGLETON(y.b)}, S(x, y, \bar{z})\big)
    \end{align}
    For readability, we highlight the differences between the three formulas in gray.
    For all matches~$h$ of~$S$:
    \begin{itemize}
        \item $\psi_1$ ensures that there exists a match $h'$ of~$S$ such that $h(x) = h'(x)$ and $h(x.a) = h'(y.b)$, that is, each $x.a$ is equal to \emph{some} suitable $y.b$;
        \item $\psi_2$ ensures that $h(y)$ has a $b$ key;
        \item $\psi_3$ ensures that all matches $h'$ of~$S$ such that $h(x) = h'(x)$ agree on $h'(y.b)$ if $h'(y)$ has a $b$ key.
    \end{itemize} 
    It remains to remark that, together, these three formulas exactly enforce $\varphi$.
\end{proofsketch}
\begin{proof}
    \label{a:p:gfd-sub-pgk-cq-eq}
    Let~$\varphi$ be a \gfd. W.l.o.g we may assume that $\varphi$ has the form
    $(S(x, y, \bar{z}) \Rightarrow C_t(x,y))$, i.e. it is a $2\gfd$, from \cref{l:split-gfd}, and that~$|C_t|=1$, otherwise it may be split into one $\gfd$ per item in $C_t$.
    We aim to construct a set~$\Psi$ of \pgkeys{} that is equivalent to $\varphi$.
    Note that~$C_t$ has one of the following forms: $\set{x.a = y.b}$, $\set{x.a = c}$, or~$\set{x = y}$.
    
    Case where~$\varphi = (S(x, y, \bar{z}) \Rightarrow \set{x.a = y.b})$.
    Let~$x'$ and~$y'$ be fresh variables, and~$\bar{z}'$ be a tuple of fresh variables with~$|\bar{z}'|=|\bar{z}|$.
    The set $\Psi$ must ensure that~$h(x.a)$ and~$h(y.b)$ are defined for all matches~$h$ of~$S$, and that for all sets of matches~$\set{h_1, \ldots, h_n}$ of~$S$ such that~$h_i(x) = h_j(x)$ (respectively~$h_i(y) = h_j(y)$), $h_i(y.b) = h_j(y.b)$ (respectively~$h_i(x.a) = h_j(x.a)$) for~$i,j \in [n]$.
    Let~$\Psi = \set{\psi_1, \psi_2, \psi_3}$ be defined as follows.
    \begin{align}
        \psi_1 ={} & \big(S(\highlight{x}, y', \bar{z}') \Rightarrow \highlight{\MANDATORY(x)}, S(x, y, \bar{z}) \highlight{\land \set{x.a = y.b}}\big) \\
        \psi_2 ={} & \big(S(x', \highlight{y}, \bar{z}') \Rightarrow \highlight{\MANDATORY(y.b)}\big) \\
        \psi_3 ={} & \big(S(\highlight{x}, y', \bar{z}') \Rightarrow \highlight{\SINGLETON(y.b)}, S(x, y, \bar{z})\big)
    \end{align}
    For the sake of readability, the differences between each formula are highlighted in gray.

    Let~$G = (V, E, \eta, \lambda, \pi)$ be a property graph.
    Let~$\theta$ be the binary relation over~$V\cup E$ defined by:~$e \mathrel{\theta} f$ if there exists a match~$h$ of $S$ such that~$h(x)=e$ and~$h(y)=f$.
    We then have the following equivalences.
    \begin{itemize}
        \item $G\models\psi_1$ is equivalent to: for all~$e\in\dom(\theta)$, $\pi(e,a)$ exists and there is~$f\in (e\theta)$ such that $\pi(f,b)$ exists and~$\pi(e,a)=\pi(f,b)$.
        \item $G\models \psi_2$ is equivalent to: for all~$f\in\im(\theta)$, $\pi(f,b)$ exists.
        \item $G\models\psi_3$ is equivalent to: for all~$e\in\dom(\theta)$ and $f,f'\in (e\theta)$ such that $\pi(f,b)$ and $\pi(f',b)$ exists,
        we have that~$\pi(f,b)=\pi(f',b)$.
    \end{itemize}
    Hence~$G$ satisfies~$\{\psi_1,\psi_2,\psi_3\}$ if and only if for all~$e,f$ such that~$e \mathrel{\theta} f$: $\pi(e,a)$ exists, $\pi(f,b)$ exists, and~$\pi(e,a)=\pi(f,b)$.
    It is precisely what~$G\models \varphi$ means.
 
    The case~$\varphi = (S(x, \bar{z}) \Rightarrow \set{x.a = c})$ is similar but easier. It is equivalent to the following \pgkey{}.
    \begin{equation}\label{eqn:gfdc->mpgk}
        \big(S(x, \bar{z}) \Rightarrow~ \MANDATORY(x), \set{x.a = c}\big)
    \end{equation}

    Similarly, the case~$\varphi = (S(x, y, \bar{z}) \Rightarrow \set{x = y})$ is equivalent to the set of \pgkeys{} $\set{\psi_4, \psi_5}$ where $\psi_4$ and $\psi_5$ are defined as follows.
    \begin{align}
        \psi_4 ={} & \big(S(x, y', \bar{z}') \Rightarrow~ \MANDATORY(x), S(x, y, \bar{z}) \land \set{x = y}\big) \\
        \psi_5 ={} & \big(S(x, y', \bar{z}') \Rightarrow~ \SINGLETON(y), S(x, y, \bar{z})\big)
    \end{align}
\end{proof}

For instance, the \gfd{} $(S(\texttt{x},\texttt{y},\texttt{z}) \Rightarrow \set{\texttt{x.speaks} = \texttt{y.speaks}})$, where $S(\texttt{x},\texttt{y},\texttt{z})$ contains only the pattern $(\texttt{(x)} \xleftarrow{\texttt{:hasMember}} \texttt{(z)} \xrightarrow{\texttt{:hasMember}} \texttt{(y)})$, is equivalent to the set of \pgkeys{} $\set{\psi_1, \psi_2, \psi_3}$ with:
\begin{align*}
    \psi_1 ={} & \big(S(\texttt{x},\texttt{y'},\texttt{z'}) \Rightarrow~ \MANDATORY(\texttt{x}), S(\texttt{x},\texttt{y},\texttt{z})\land \set{\texttt{x.speaks} = \texttt{y.speaks}}\big) \\
    \psi_2 ={} & \big(S(\texttt{x},\texttt{y},\texttt{z}) \Rightarrow~ \MANDATORY(\texttt{y.speaks})\big) \\
    \psi_3 ={} & \big(S(\texttt{x},\texttt{y'},\texttt{z'}) \Rightarrow~ \SINGLETON(\texttt{y.speaks}), S(\texttt{x},\texttt{y},\texttt{z})\big)
\end{align*}
Intuitively, $\psi_1$ requires that \texttt{x} has a ``\texttt{speaks}'' key and that some member \texttt{y} sharing a forum with \texttt{x} speaks the same language. $\psi_2$ ensures that every such member \texttt{y} has a ``\texttt{speaks}'' key, while $\psi_3$ enforces that all these members speak the same language. Together, this means every \texttt{y} who shares a forum with \texttt{x} speaks the same language as \texttt{x}.

\begin{remark}
    \label{r:npgk-is-ggd-cq-eq}
    An adaptation of \cref{p:mpgk-is-1ggd-cq-eq,p:pgk-sub-ggd-cq-eq} shows that \emph{unconstrained \pgkeys{}} are equivalent to $\ggd{}$.
\end{remark}

\subsection{Separations}
\label{ssec:separations-eq}

The separation results in this section are of two kinds. First, fragments with only one shared variable cannot usually simulate fragments with more shared variables
(\cref{p:ggd-neq-pgk-cq-eq,p:1ggd-neq-gfd-cq-eq}).
Second, fragments with only one shared variable cannot simulate the $\SINGLETON$ and $\EXCLUSIVE$ assertion keywords (Proposition~\ref{p:mpgk-neq-pgk-cq-eq}).
First, we establish $\pgkeys \neq \ggd$.

\begin{propositionrep}
    \label{p:ggd-neq-pgk-cq-eq}
    For~$\crpqeq$, some \ggds{} are not expressible in \pgkeys.
\end{propositionrep}
\begin{proofsketch}
    The proof amounts to showing that $\varphi = (x \rightarrow y \rightarrow z \Rightarrow x \rightarrow z)$ cannot be expressed in \ggd, and uses ideas similar to those used in \cref{t:nggd-sub-mggd}. One simply has to be careful that the additional $\EXCLUSIVE$ and $\SINGLETON$ keywords cannot be used to simulate extra shared variables. 
    A complete proof can be found in \cref{a:p:ggd-neq-pgk-cq-eq}.
\end{proofsketch}
\begin{proof}
    \label{a:p:ggd-neq-pgk-cq-eq}
    Let $\varphi = (x \rightarrow y \rightarrow z \Rightarrow x \rightarrow z)$ be a \ggd{}, and $\Psi$ be a set of $\pgkeys$.
    Let~$N = \max\limits_{\psi \in \Psi} |\var_\psi| + 1$, where $\var_\psi$ is the set of variables of $\psi$. Let $K_1$ and $K_2$ be cliques that share exactly one vertex $u$ with $|K_1| = |K_2| = N$. Let $G_1 = K_1$ and $G_2 = K_1 \cup K_2$ be graphs.
    One can notice that $G_1 \models \varphi$ and $G_2 \not\models \varphi$. Indeed, if $v_1 \neq u$ is a vertex of $K_1$ and $v_2 \neq u$ a vertex of $K_2$, then $v_1 \rightarrow u \rightarrow v_2$, but there is no edge from $v_1$ to $v_2$.

    We now show that $G_1 \models \Psi$ implies that $G_2 \models \Psi$. Let $\psi \in \Psi$. Then $\psi$ is of the form:
    \begin{equation}
        \psi ={} \big(S(x, \bar{y}) \Rightarrow \alpha(\bar{\nu}), T(x, \bar{z})\big)
    \end{equation}
    
    Assume that $G_1\models \psi$. The only interesting case is $\alpha =~ \MANDATORY$.
    Indeed, because $G_1$ is a clique, if $\alpha =~ \EXCLUSIVE$, then the key expression of $\psi$ is trivially satisfied if $x\in\bar\nu$ and trivially unsatisfiable if $x\notin\bar\nu$. Similarly, if $\alpha =~ \SINGLETON$, then the key expression of $\psi$ is trivially satisfied if $\bar\nu = (x)$ and trivially unsatisfiable otherwise. Note that an unsatisfiable key expression does not imply $G_1 \not\models \psi$; it only means $S$ or $T$ has no match, which directly yields $G_2 \models \psi$, completing the proof.

    Assume now that $\alpha =~ \MANDATORY$. Assume that there exists a match $h_s^{(2)}$ of $S$ over $G_2$. Remark that the image of $h_s^{(2)}$ is a graph of size less than $N$. Since $K_1$ is a clique, it follows that the image of $h_s^{(2)}$ is isomorphic to some subgraph of $K_1$. Hence, there exists a match $h_s^{(1)}$ of $S$ over $G_1$. Since $G_1 \models \psi$, this implies that there exists a match $h_t^{(1)}$ of $T$ over $G_1$ which covers $\bar \nu$.

    Let $i\in \set{1,2}$ such that $h_s^{(2)}(x)$ belongs to $K_i$. Again, since the image of $h_t^{(1)}$ is a graph of size less than $N$, it is isomorphic to some subgraph of $K_i$, for which we can arbitrarily choose that $h_t^{(1)}(x)$ is mapped to $h_s^{(2)}(x)$ since all vertices of $K_i$ have symmetric roles. Hence, there exists a match $h_t^{(2)}$ of $T$ over $G_2$ such that $h_t^{(2)}(x) = h_s^{(2)}(x)$. Thus, $G_2 \models \psi$, which concludes the proof.
\end{proof}

\begin{toappendix}
    
\begin{remark}\label{r:connectedness}

The proof of \cref{p:ggd-neq-pgk-cq-eq} may seem unnecessarily complicated. Indeed, consider the following 2\ggd{} that states that a graph is a clique $\varphi' = ((x)(y) \Rightarrow (x) \rightarrow (y))$, where $(x)(y)$ is the disconnected \cq{} with two vertex variables and no edge variable. One can show that $\varphi'$ cannot be expressed in \pgkeys{}, as
no $\pgkey$ can distinguish a large enough clique from two disjoint copies of it.

However, the underlying graph of the source query of $\varphi'$ is disconnected. Connectedness and its effect on expressive power is a natural future work direction. Thus, it is of particular interest to us that the separation of \ggd{} and \pgkeys{} still holds for connected patterns.
\end{remark}

\end{toappendix}

From \cref{p:closure-induced-subgraph:gfd,p:closure-induced-subgraph:exist} of \cref{p:closure-induced-subgraph}, we already know that there are constraints that are expressible in 
$1\ggd$ but not in $\gfd$. Let us show the converse.

\begin{proposition}
    \label{p:1ggd-neq-gfd-cq-eq}
    For~$\crpqeq$, some constraints are expressible in~$\gfd$ but not in~$1\ggd$.
\end{proposition}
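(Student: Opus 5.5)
We will use the \gfd{}
\[
  \varphi = \big(x \rightarrow y \rightarrow z,\ \set{\ex(x.a), \ex(z.a)} \Rightarrow \set{x.a = z.a}\big),
\]
a $2\gfd$ whose head compares data on the \emph{two} endpoints of a length-two path. The intuition is the same as for \cref{t:nggd-sub-mggd} and \cref{p:ggd-neq-pgk-cq-eq}. A $1\ggd$ only ``remembers'' one object of the source match. Its head can then only demand the \emph{existence} of some structure around that object. It cannot relate the data values of two far-apart objects of the same source match. We argue by contradiction: fix a finite set $\Psi$ of $1\ggd$ for $\crpqeq$ equivalent to $\varphi$. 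Let $N = \max_{\psi\in\Psi}|\var_\psi| + 2$.

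\textbf{Graphs.} Let $K_0$ and $K_1$ be cliques of size $N$ in which every vertex carries $a=0$ (resp.\ $a=1$). Let $u$ be an extra vertex without key $a$. $G_1$ is the disjoint union of $K_0\cup\set{u}$ and $K_1\cup\set{u'}$, each made into a clique, where $u'$ is a copy of $u$. $G_2$ is obtained from $G_1$ by identifying $u$ and $u'$. Checking the claims on $\varphi$ is immediate:
\begin{itemize}
  \item $G_1\models\varphi$, since each component has a uniform $a$-value wherever $a$ is defined.
  \item $G_2\not\models\varphi$, witnessed by $v_0\rightarrow u\rightarrow v_1$ with $v_i\in K_i$.
\end{itemize}
So it suffices to show $G_1\models\Psi\Rightarrow G_2\models\Psi$.

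\textbf{Transfer.} Take $\psi=(S(x,\bar y)\Rightarrow T(x,\bar z))\in\Psi$ and a match $h_s^{(2)}$ of $S$ over $G_2$. The head side is easy. $G_2$ contains an isomorphic copy of each component of $G_1$, and clique symmetry lets us send the designated vertex anywhere with the same data. Hence it suffices to find a match $h_s^{(1)}$ of $S$ over $G_1$ such that $h_s^{(1)}(x)$ and $h_s^{(2)}(x)$ lie in ``the same type'' of position: same $a$-value, or both equal to $u$, or, for edges, the same kind of edge. Then the $T$-match supplied by $G_1\models\psi$ can be re-embedded into $G_2$ around $h_s^{(2)}(x)$. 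This reuses the argument of \cref{p:ggd-neq-pgk-cq-eq}, with the extra care that values are preserved; an injective renaming inside a uniformly-valued clique preserves all equality predicates.

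\textbf{Main obstacle.} Pulling the source match back from $G_2$ to $G_1$ is the step I expect to be hardest. A source match in $G_2$ may pass through $u$ and use vertices of both $K_0$ and $K_1$. No such match exists in $G_1$, where the two halves are disconnected, unless $S$ is itself disconnected. My first attempt is to exploit the fact that conditions are equality-only and the $a$-values involved are $0$, $1$, or undefined. Concretely, I would enlarge $G_1$ with extra components so that $G_1$ still satisfies $\varphi$ but realizes every ``glued'' shape. An example is a copy of $K_0\cup\set{u}$ glued at $u$ to a second all-$0$ clique, together with the symmetric all-$1$ version.

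Such a component is enough in a restricted case: when every variable of $S$ that is compared (by equality or constant) with a vertex of $x$'s side lies on that side. In that case the folding map, which sends the far clique onto a same-valued one, preserves all equalities. What remains are constant predicates $y.a=1$ on the far side together with $x.a=0$ on the near side. If these cannot be handled this way, I would fall back on the induced-subgraph style of \cref{p:closure-induced-subgraph}. That means choosing $\varphi$ so that its violation in $G_2$ needs a value comparison along a path no single-anchored pattern can certify, which I would verify case by case on the keyword-free form of $1\ggd$.
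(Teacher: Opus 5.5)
The constraint you chose does not separate the two languages for $\crpqeq$: it is itself expressible in $1\ggd$. Consider the single $1\ggd$
\[
\psi = \big(x \xrightarrow{p} z,\ \set{\ex(x.a), \ex(z.a)} \Rightarrow x' \xrightarrow{p} z',\ \set{x'.a = z'.a}\big).
\]
Its only shared variable is the path variable $p$. The regular expression of $p$ is $r\,r$, where $r$ is the label expression of the arrows of $\varphi$ ($\splbl\splbl$ on unlabeled graphs). The matches of $x \xrightarrow{p} z$ are exactly the length-two paths, that is, the matches of $x \rightarrow y \rightarrow z$. Consistency forces $h_t(p) = h_s(p)$, so $x'$ and $z'$ are bound to $h_s(x)$ and $h_s(z)$. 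Hence $\psi$ is equivalent to your $\varphi$. This is the same device as the paper's example in Section~\ref{sec:gen}, where one shared edge variable transmits both of its endpoints. Concretely, $G_1 \models \psi$ but $G_2 \not\models \psi$, witnessed by $v_0 \rightarrow u \rightarrow v_1$. So your transfer claim $G_1 \models \Psi \Rightarrow G_2 \models \Psi$ fails for $\Psi = \set{\psi}$, whatever values you place on the cliques. Your transfer sketch tacitly assumes the shared object is a vertex or a single edge lying on one side of $u$. A shared path variable can straddle $u$, and then no pull-back into the disconnected $G_1$ preserves it.

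There is a second, independent gap: you fix the values $0$ and $1$ before $\Psi$ is known. Then $\big(w \rightarrow y \rightarrow w', \set{w.a = 0, w'.a = 1} \Rightarrow (w), \bot\big)$ is a $1\ggd$ implied by $\varphi$, so it may belong to a candidate $\Psi$. It holds in $G_1$, and in every enlargement of $G_1$ that still satisfies $\varphi$, yet fails in $G_2$. So the ``remaining case'' you postponed is fatal, not technical.

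The missing idea is the paper's: choose $\alpha \neq \beta$ outside $\val(\Psi)$. The pull-back then needs no extra components. Fold the clique not containing $h_s^{(2)}(x)$ onto the other one by a bijection fixing $u$. This is a homomorphism onto one component of $G_1$ that fixes $h_s^{(2)}(x)$ and renames $\beta$ to $\alpha$ uniformly. Hence every equality predicate true in $G_2$ stays true, and no constant predicate can be true at all. The head match is then pushed back to $G_2$ by the quotient map $u' \mapsto u$. This repairs your argument for $\cqeq$, where the shared object always lies on one side of $u$, but not for $\crpqeq$.

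The paper avoids the issue by using the disconnected $\big((x)(y) \Rightarrow \set{x.a = y.a}\big)$ on edgeless graphs, folding all of $\bar y$ onto $h(x)$. If you want a connected witness for $\crpqeq$, compare two vertices not joined by a directed path, e.g.\ $x \rightarrow y \leftarrow z$. Take $v_0.a = \alpha$ and $v_1.a = \beta$. Let $G_2$ consist of the edges $v_0 \rightarrow u \leftarrow v_1$, and $G_1$ of the disjoint edges $v_0 \rightarrow u$ and $v_1 \rightarrow u'$, with no self-loops. Every path then lies on one side of $u$, and the same folding argument goes through.
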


\begin{proof}
    \label{a:p:1ggd-neq-gfd-cq-eq}
    We set the \gfd~$\varphi = \big((x)(y), \emptyset \Rightarrow \{x.a = y.a\}\big)$, and aim to show that it is not expressible in $1\ggd$.
    Let $\Psi$ be a set of $1\ggd$.
    For every~$\psi \in \Psi$, let~$\val(\psi)$ denote the set of constants that appear in~$\psi$, and $\val(\Psi) = \bigcup\limits_{\psi \in \Psi} \val(\psi)$.
    Let $G_1 = (\{1\}, \emptyset, \emptyset, \lambda_1, \pi_1)$ and $G_2 = (\{2\}, \emptyset, \emptyset, \lambda_2, \pi_2)$ be graphs where $\lambda_1(1) = \lambda_2(1) = \emptyset$, $\pi:~(1, a) \mapsto \alpha$, $\pi_2:~(2, a) \mapsto \beta$ with $\alpha, \beta \not\in \val(\Psi)$. Finally let $G_3$ be the union of $G_1$ and $G_2$.
    
    One can notice that~$G_1 \models \varphi$, $G_2 \models \varphi$ and~$G_3 \not\models \varphi$. 
    Let us show that if~$G_1 \models \Psi$ and $G_2\models \Psi$, then~$G_3 \models \Psi$, which proves that $\Psi$ cannot be equivalent to $\varphi$.

    Let~$\psi = (S(x, \bar{y}) \Rightarrow T(x, \bar{z})) \in \Psi$ and $h_s^{(3)}$ be a match of $S$ over $G_3$, and assume that $h_s^{(3)}(x) = 1$. The other case is proved in a similar way, by replacing $1$ with $2$, since $\alpha$ and $\beta$ are arbitrary values which do not occur in $\val(\Psi)$.

    Since $G_3$ contains no edge, this means that all variables $x$ and $\bar y$ of $S$ are disconnected vertex variables. We define $h_s^{(1)}$ as the mapping that sends $x$ and $\bar y$ to $1$. Remark that $h_s^{(1)}$ is a match of $S$ over $G_3$. Indeed, since $S$ only contains equality predicates, and those predicates hold in $h_s^{(3)}$, they also hold in $h_s^{(1)}$, which makes more equalities true.

    Now remark that $h_s^{(1)}$ is also a match of $S$ over $G_1$. Thus, there exists a match $h_t^{(1)}$ of $T$ over $G_1$ that is consistent with $h_s^{(1)}$. Since $G_3$ contains $G_1$, $h_t^{(1)}$ is also a match of $T$ over $G_3$ that is consistent with $h_s^{(1)}$. Finally, since the only variable that occurs both in $h_s^{(1)}$ and $h_t^{(1)}$ is $x$ and $h_s^{(1)}(x) = h_s^{(3)}(x)$, then $h_t^{(1)}$ is also consistent with $h_s^{(3)}$. Thus $G_3 \models \psi$.
\end{proof}

The assertion keywords $\EXCLUSIVE$ and $\SINGLETON$ were introduced in \pgkeys{} \cite{DBLP:conf/sigmod/AnglesBDFHHLLLM21}.
They are expected to affect expressive power, and we show that this is indeed the case for \crpqeq. There are \pgkeys{} using $\EXCLUSIVE$ or $\SINGLETON$ that cannot be expressed in \mpgkeys, i.e., with constraints that all use $\MANDATORY$.

\begin{proposition}\label{p:mpgk-neq-pgk-cq-eq}
    For~$\crpqeq$, some \pgkeys{} are not expressible in \mpgkeys.
\end{proposition}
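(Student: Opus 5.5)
By \cref{p:mpgk-is-1ggd-cq-eq}, \mpgkeys{} and $1\ggd$ coincide for \crpqeq{}. It therefore suffices to exhibit a \pgkey{} using $\EXCLUSIVE$ that no set of $1\ggd$ can express. The natural candidate is the exclusivity counterpart of the \gfd{} used in \cref{p:1ggd-neq-gfd-cq-eq}:
\[
\varphi = \big((x) \Rightarrow~ \EXCLUSIVE(x.a)\big),
\]
which states that no two distinct vertices share the same $a$-value. The plan is to reuse the "disjoint union" argument from the proof of \cref{p:1ggd-neq-gfd-cq-eq}, with both components carrying the \emph{same} fresh value.

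Fix a set $\Psi$ of $1\ggd$ and choose a value $\alpha \notin \val(\Psi)$. Let $G_1$ be the one-vertex graph on vertex $1$ with no edges, no labels, and $\pi(1,a)=\alpha$. Let $G_2$ be the analogous graph on vertex $2$ with $\pi(2,a)=\alpha$, and let $G_3$ be their disjoint union. Then $G_1\models\varphi$ and $G_2\models\varphi$, but $G_3\not\models\varphi$, since the vertices $1\neq 2$ share the key value $\alpha$.

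We then show that $G_1,G_2\models\Psi$ implies $G_3\models\Psi$. Take $\psi=(S(x,\bar y)\Rightarrow T(x,\bar z))\in\Psi$ and a match $h_s^{(3)}$ of $S$ over $G_3$, and suppose without loss of generality that $h_s^{(3)}(x)=1$. Since $G_3$ has no edges, $S$ has no path variables, and all its variables are vertex variables. Collapse every variable to $1$ to obtain $h_s^{(1)}$. This remains a match because only equality predicates occur, and they are preserved. The key observation is that all data values are equal to $\alpha$, so any property equality $u.a=v.b$ that held under $h_s^{(3)}$ forces $a$-keys only, and these still evaluate to $\alpha$. Equalities with constants cannot hold at all, since $\alpha\notin\val(\Psi)$. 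Identifier equalities become trivially true.

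So $h_s^{(1)}$ is a match over $G_1$, and $G_1\models\psi$ yields a consistent match $h_t^{(1)}$ of $T$ over $G_1\subseteq G_3$. Because $x$ is the only shared variable and $h_s^{(1)}(x)=h_s^{(3)}(x)$, this match is also consistent with $h_s^{(3)}$. The case $h_s^{(3)}(x)=2$ is symmetric, via the isomorphism swapping $1$ and $2$.

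The main obstacle is ensuring that the collapse to a single vertex preserves all conditions of $S$. This is exactly where equality-only predicates (no $\neq$) and the choice of identical values in both components are used. I would double-check that no predicate of the form $x.b=\cdots$ with $b\neq a$ can be satisfied in $G_3$; it cannot, since only key $a$ is defined, so such matches do not exist in the first place. Finally, the same argument goes through for $\cqeq$, since no paths are involved.
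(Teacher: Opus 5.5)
Your proof is correct, but it separates the two languages with a different witness from the paper's. The paper uses the $\SINGLETON$ key $(x \Rightarrow \SINGLETON(y),\, x \rightarrow y)$ on graphs without data. There, $G_1$ is a single edge $u \xrightarrow{e_1} v$, and $G_2$ adds a second edge $u \xrightarrow{e_2} w$. Scope matches over $G_2$ are folded onto $G_1$ (sending $w \mapsto v$ and $e_2 \mapsto e_1$). Target matches are then unfolded back into the $w$-branch whenever $x$ was bound there.

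You instead use $\EXCLUSIVE(x.a)$ on edge-free graphs and reuse the two-copies argument of \cref{p:1ggd-neq-gfd-cq-eq}. Both proofs rest on the same principle. With only equality predicates, a scope match can be pushed through a label- and value-preserving homomorphism onto a smaller graph. Because only one variable is shared, the resulting target match can then be placed around the original image of $x$. In your proof this placement is done by choosing which copy to collapse onto.

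The paper's choice gives a purely topological separation, showing that $\SINGLETON$ adds power even without any properties. Yours shows that $\EXCLUSIVE$ adds power, so the two arguments are complementary.

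One small simplification: in your setting the freshness condition $\alpha \notin \val(\Psi)$ is unnecessary. Both vertices carry the same value, so the collapse $G_3 \to G_1$ preserves every predicate $u.a = c$, whatever $c$ is. Freshness was needed in \cref{p:1ggd-neq-gfd-cq-eq} only because the two vertices there carried distinct values.
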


\begin{proof}
    Let~$G_1$ be a graph without data with two vertices~$u, v$ and an edge~$e_1$ from~$u$ to~$v$, and~$G_2$ be a graph without data with three vertices~$u, v, w$, an edge~$e_1$ from~$u$ to~$v$ and an edge~$e_2$ from~$u$ to~$w$. Let $\varphi = (x \Rightarrow~ \SINGLETON(y), x \rightarrow y)$. One can notice that~$G_1 \models \varphi$ and~$G_2 \not\models \varphi$. Let~$\Psi$ be a set of \mpgkeys{}, which we can equivalently assume to be a set of $1\ggds$ by \cref{p:mpgk-is-1ggd-cq-eq}, and let $\psi = \big(S(x, \bar{y}) \Rightarrow T(x, \bar{z})\big)\in \Psi$.
    
    We now show that $G_2 \not\models \psi \implies G_1 \not\models \psi$. If~$G_2 \not\models \psi$, then there exists a match~$h_s^{(2)}$ of $S$ over~$G_2$ for which there is no consistent match~$h_t^{(2)}$ of $T$. Notice that there is no path of length~$2$ or more in~$G_2$ hence path variables of $S$ can be matched only to paths of length 1: either~$e_1$ or~$e_2$. We define~$h_s^{(1)}$ in \eqref{eq:hsht}, on the left, for all~$y \in \Var_S$.
    \begin{align}\label{eq:hsht}
        h_s^{(1)}(y) = \begin{cases}
            v & \text{if } h_s^{(2)}(y) = w \\
            e_1 & \text{if } h_s^{(2)}(y) = e_2 \\
            h_s^{(2)}(y) & \text{else.}
        \end{cases}
        &&
        h_t^{(2)}(y) = \begin{cases}
        u   & \text{if } h_t^{(1)}(y) = u \\
        w   & \text{if } h_t^{(1)}(y) = v \\
        e_2 & \text{if } h_t^{(1)}(y) = e_1
        \end{cases}
    \end{align}
    As both graphs are without data and identifier equalities are preserved, $h_s^{(1)}$ is a match of $S$.
    
    If there is no match of $T$ over~$G_1$ consistent with~$h_s^{(1)}$, then $G_1 \not\models \psi$. Otherwise, let~$h_t^{(1)}$ be such a match. If $h_s^{(2)}(x) \in \set{u, v, e_1}$, then $h_t^{(1)}$ is a match of $T$ over~$G_2$ consistent with~$h_s^{(2)}$. If $h_s^{(2)}(x) \in \set{w, e_2}$, we define~$h_t^{(2)}$ as in \eqref{eq:hsht}, right, for all~$y \in \Var_T$.
    It is a match of $T$ over~$G_2$ because both graphs are without data and equalities between identifiers are still true. Moreover it is consistent with~$h_s^{(2)}$ by definition. In both cases, we have~$G_2 \models \psi$, which is a contradiction. Finally: $G_1 \not\models \psi$.
\end{proof}

Finally, as a curiosity, we show that $\gfd$ for $\crpqeqc$ does not capture all constraints that can be expressed in both 1\ggd{} and \gfd{}. See \cref{a:p:gfdc-neq-gfd-1ggd} for the proof.

\begin{propositionrep}
    \label{p:gfdc-neq-gfd-1ggd}
    There are constraints that are expressible in $1\ggd$ and in \gfd{} for $\crpqeq$, but not in \gfd{} for $\crpqeqc$.
\end{propositionrep}
\begin{proof}
    \label{a:p:gfdc-neq-gfd-1ggd}
    Let $\varphi = \big(x \Rightarrow  \set{x.a = x.b}\big)$ be a $1\ggd$ for \crpqeq. Remark that $\varphi$ is also a $\gfd$ for \crpqeq. Let $\Psi$ be a set of $\gfd$ for \crpqeqc.
    For every~$\psi \in \Psi$, let~$\val(\psi)$ denote the set of constants that appear in~$\psi$,
    and $\val(\Psi)$ be the set of constants that appear in some~$\psi\in\Psi$.
    Let~$G_1 = (V, E, \eta, \lambda, \pi)$ and $G_2 = (V, E, \eta, \lambda, \pi')$ be graphs with
    $V = \set{1}$, $E = \emptyset$,
    $\lambda(1) = \emptyset$, $\pi(1,a) = \pi(1,b) = \pi'(1,a) = \alpha$ and $\pi'(1,b) = \beta$ where $\alpha, \beta \in \val\setminus\val(\psi)$.
    One can notice that $G_1 \models \varphi$ and $G_2 \not\models \varphi$.

    Let $\psi = (S(\bar{x}, \bar{y}) \Rightarrow C_t(\bar{x})) \in \Psi$. We show that $G_1 \models \psi \implies G_2 \models \psi$.
    Let $h_2$ be a match of $S$ over $G_2$. Let $h_1$ be a copy of $h_2$ over $G_1$. This is well defined, because both graphs share the same vertices, edges, and labels. Moreover, $h_1$ satisfies the conditions of $S$, because $\alpha$ and $\beta$ do not appear in $\val(\psi)$, so all $=_c$ predicates return the same truth value over $\alpha$ and $\beta$. Since $G_1 \models \psi$, we know that $h_1 \models C_t$. By the same argument, we deduce that $h_2 \models C_t$. Thus $G_2 \models \psi$ and $G_2 \models \Psi$. Therefore, no set of $\gfd$ for $\crpqeqc$ can accept $G_1$ without accepting $G_2$, hence $\varphi$ is not expressible in $\gfd$ for $\crpqeqc$.
\end{proof}

The results of this section yield the following main theorem, summarized in Figure~\ref{fig:venn-eq}.

\begin{theorem}
\label{thm:sec4}
For $\crpqeq$, $1\ggd{}$ and $\gfd$ are strictly included in $\pgkeys$, which is strictly included in $\ggd$.
\end{theorem}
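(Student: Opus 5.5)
The theorem only gathers results already proved in this section, so the proof consists of assembling the inclusions and then the separations, all for $\crpqeq$.

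\emph{Inclusions.} By \cref{p:mpgk-is-1ggd-cq-eq}, $1\ggd = \mpgkeys$. Since $\mpgkeys \subseteq \pgkeys$ holds syntactically, we get $1\ggd \subseteq \pgkeys$. \Cref{p:gfd-sub-pgk-cq-eq} gives $\gfd \subseteq \pgkeys$. \Cref{p:pgk-sub-ggd-cq-eq} gives $\pgkeys \subseteq \ggd$.

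\emph{Strictness of $\pgkeys \subsetneq \ggd$.} This is exactly \cref{p:ggd-neq-pgk-cq-eq}.

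\emph{Strictness of $\gfd \subsetneq \pgkeys$.} Use \cref{p:closure-induced-subgraph}, whose item~\ref{p:closure-induced-subgraph:exist} provides an $\mpgkey$, hence a $\pgkey$, that is not closed under induced subgraphs. Item~\ref{p:closure-induced-subgraph:gfd} says every $\gfd$ is closed under induced subgraphs. This closure passes to finite sets of $\gfd$s, since a conjunction of closed properties is closed. So this $\pgkey$ is not expressible by any set of $\gfd$s, which is item~\ref{p:closure-induced-subgraph:sep}.

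\emph{Strictness of $1\ggd \subsetneq \pgkeys$.} I see two routes, and either suffices.
\begin{itemize}
\item By \cref{p:mpgk-neq-pgk-cq-eq}, some $\pgkey$ is not expressible in $\mpgkeys$, and $\mpgkeys = 1\ggd$ by \cref{p:mpgk-is-1ggd-cq-eq}.
\item By \cref{p:1ggd-neq-gfd-cq-eq}, some $\gfd$ is not expressible in $1\ggd$, and that $\gfd$ already lies in $\pgkeys$ by \cref{p:gfd-sub-pgk-cq-eq}.
\end{itemize}

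The main point needing care is that every separation is stated for \emph{sets} of constraints on the smaller side, matching the notion of expressibility used for the inclusions. Each cited proof already argues against an arbitrary set $\Psi$, so this causes no trouble. The only nontrivial step is the closure argument, and it holds because satisfaction of a set is the intersection of satisfactions.
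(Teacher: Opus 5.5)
Your proposal is correct and takes the same approach as the paper: the theorem is obtained by combining \cref{p:mpgk-is-1ggd-cq-eq,p:gfd-sub-pgk-cq-eq,p:pgk-sub-ggd-cq-eq} for the inclusions with \cref{p:ggd-neq-pgk-cq-eq,p:closure-induced-subgraph,p:mpgk-neq-pgk-cq-eq} for the separations, as recorded in the legend of \cref{fig:inclusion-cqeq-cqneq}. Your alternative route to $1\ggd \subsetneq \pgkeys$ through \cref{p:1ggd-neq-gfd-cq-eq} is also sound: if every \pgkey{} in the set simulating that \gfd{} were expressible in $1\ggd$, the union of the resulting sets of $1\ggd$ constraints would express the \gfd{}, contradicting that proposition.
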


\section{Queries with Equality and Inequality}\label{sec:cq:neq}
In this section, we consider the case of $\crpqneq$. Results for this case are summarized in~\cref{thm:sec5} and in~\cref{fig:venn-neq}, with all inclusions shown on the figure being strict. While most results readily translate from previous statements, having access to $\neq$ still leads to some unexpected findings. In particular, it makes it possible to simulate the $\SINGLETON$ and $\EXCLUSIVE$ keywords of $\pgkeys{}$ even in fragments that are limited to one shared variable, leading to $\pgkeys = 1\ggd{}$ (\cref{p:mpgkey=pgkey}). In this case, the $\EXCLUSIVE$ and $\SINGLETON$ keywords add no expressive power over $\MANDATORY$, i.e., $\pgkeys = \mpgkeys$.

\begin{propositionrep}
    \label{p:mpgkey=pgkey}
    For~$\crpqneq$,
    $1\ggd$, $\mpgkeys$ and $\pgkeys$ are all equivalent.
\end{propositionrep}
\begin{proofsketch}
    The equivalence between $1\ggd$ and $\mpgkeys$ follows by a proof similar to that of \cref{p:mpgk-is-1ggd-cq-eq} and, by definition, an $\mpgkey$ is a $\pgkey$.
    Hence, it is enough to show that any \pgkey{} using $\EXCLUSIVE$ or $\SINGLETON$ can be simulated by several \pgkeys{} using $\MANDATORY$ only.
    Now, let~$\varphi ={} \big(S(x, \bar{y}) \Rightarrow \alpha(\bar \nu), T(x, \bar{z})\big)$ with~$\alpha \in \set{\EXCLUSIVE, \SINGLETON}$ and $\bar{\nu} = (\nu_1, \ldots, \nu_k)$. 
    We let $\bar{\nu}'$ be a copy of $\bar{\nu}$ where each variable $t$ is replaced by $t'$, and we next set the $\mpgkeys$ $\psi^{(1)}$ and $\psi^{(2)}_1, \cdots, \psi^{(2)}_k$.
    \begin{align*}
        \psi^{(1)} ={} & \big(S(x, \bar{y}) \land S(x', \bar{y}') \land T(x, \bar{z}) \land T(x', \bar{z}') \land \set{x \neq x', \bar{\nu} = \bar{\nu}'} \Rightarrow~ \MANDATORY(x), \bot\big) \\
        \psi^{(2)}_i ={} & \big(S(x, \bar{y}) \land T(x, \bar{z}) \land T(x, \bar{z}') \land \set{\nu_i \neq \nu'_i} \land \ex(\bar{\nu}, \bar{\nu}') \Rightarrow~ \MANDATORY(x), \bot\big)
    \end{align*}
    If $\alpha =~ \EXCLUSIVE$ then $\psi^{(1)}$ is an equivalent \mpgkey{} for $\varphi$, and if $\alpha =~ \SINGLETON$ then $\set{\psi^{(2)}_1, \ldots, \psi^{(2)}_k}$ is an equivalent set of \mpgkeys{} for $\varphi$.
    Indeed, if~$\alpha =~ \EXCLUSIVE$ and~$G$ is a graph such that~$G \not\models \varphi$, then there exist witnesses~$(h_s, h_t)$ and~$(h'_s, h'_t)$ of the violation of~$\varphi$ by~$G$, and one can build the following match~$h$ of the scope of~$\psi^{(1)}$, that makes the descriptor of~$\psi^{(1)}$ unsatisfiable:
    $h(z) = \begin{cases}
        \begin{aligned}
            h_s(z) & \text{ if $z \in \bar{x}$} & \qquad & h'_s(z) & \text{if $z \in \bar{x}'$} \\
            h_t(z) & \text{ if $z \in \bar{y}$} & \qquad & h'_t(z) & \text{if $z \in \bar{y}'$}
        \end{aligned}
    \end{cases}$
    
    For $\alpha =~ \SINGLETON$, the proof is similar; see \cref{a:p:mpgkey=pgkey}.
\end{proofsketch}
\begin{proof}
    \label{a:p:mpgkey=pgkey}
    The equivalence between $1\ggd$ and $\mpgkeys$ follows by a proof similar to that of \cref{p:mpgk-is-1ggd-cq-eq} and, by definition, an $\mpgkey$ is a $\pgkey$.

    We now show how to rewrite a $\pgkey{}$ $\varphi$ into $\mpgkeys{}$. There are two cases, according to the assertion keyword in $\varphi$.

    \noindent {\bf Case 1: $\EXCLUSIVE$.} In that case, $\varphi$ is of the form:
    \begin{align*}
        \varphi ={} \big(S(x, \bar{y}) \Rightarrow~ \EXCLUSIVE(\bar \nu), T(x, \bar{z})\big)
    \end{align*}
    We define $\psi$ as the following $\mpgkey{}$:
    \begin{align*}
        \psi^{(1)} ={} & \big(S(x, \bar{y}) \land S(x', \bar{y}') \land T(x, \bar{z}) \land T(x', \bar{z}') \land \set{x \neq x', \bar{\nu} = \bar{\nu}'} \Rightarrow~ \MANDATORY(x), \bot\big)
    \end{align*}
    where $\bar \nu'$ is a copy of $\bar \nu$ in which each variable $z$ is replaced by $z'$.
    
    Let us show that $\psi$ is equivalent to $\varphi$. Let $G$ be a graph. Assume that $G \not \models \varphi$. Then there exist two witnesses $(h_s,h_t)$ and $(h_s',h_t')$ for $\varphi$ such that $h_s(x) \neq h_s'(x)$ and $h_t(\bar \nu) = h_t'(\bar \nu)$. From these witnesses, we can build a match $h$ of the scope of $\psi$ as follows:
    $$h(z) = 
        \begin{cases}
            h_s(z) &\text{ if } z\in \bar x \\
            h_t(z) &\text{ if } z\in \bar y \\
            h_s'(z) &\text{ if } z\in \bar x'\\
            h_t'(z) &\text{ if } z\in \bar y'
        \end{cases}
    $$
    One can check that $h$ is indeed a match of the scope of $\psi$ by definition of $(h_s,h_t)$ and $(h_s',h_t')$. However, the descriptor of $\psi$ is not satisfiable. Thus $G \not\models \psi$. 
    
    The converse direction follows from similar arguments: a match of the scope of $\psi$ can be split into two witnesses for $\varphi$ that violate the $\EXCLUSIVE$ assertion.

    \noindent {\bf Case 2: $\SINGLETON$.} In that case, $\varphi$ is of the form:
    \begin{align*}
        \varphi ={} \big(S(x, \bar{y}) \Rightarrow~ \SINGLETON(\bar \nu), T(x, \bar{y})\big)
    \end{align*}
    Let $\bar{\nu} = (\nu_1,\ldots,\nu_k)$. We can show with arguments that are similar to Case~1 that $\varphi$ can be rewritten into the set of $\mpgkeys$
    $\Psi = \set{\psi_1, \ldots, \psi_k}$ where:
    \begin{align*}
        \psi_i ={} & \big(S(x, \bar{y}) \land T(x, \bar{z}) \land T(x, \bar{z}') \land \set{\nu_i \neq \nu'_i} \land \ex(\bar{\nu}, \bar{\nu}') \Rightarrow~ \MANDATORY(x), \bot\big)
    \end{align*}
    In other words, $G\not \models \Psi$ if and only if there exist two $\nu_i$ with different values for the same $x$, which precisely captures the $\SINGLETON$ assertion.
\end{proof}

For instance, the \pgkey{} $\varphi$ and \mpgkey{} $\psi$ both enforce that each message has a unique creator.
\begin{align*}
    \varphi = & \big(\texttt{(z:Message)} \Rightarrow~ \SINGLETON(\texttt{x}), \texttt{(x)} \xleftarrow{\texttt{:hasCreator}} \texttt{(z)}\big) \\
    \psi = & \big(\texttt{(z:Message)} ~\land~ \texttt{(z)} \xrightarrow{\texttt{:hasCreator}} \texttt{(x)} ~\land~ \texttt{(z)} \xrightarrow{\texttt{:hasCreator}} \texttt{(x')}, \set{\texttt{x} \neq \texttt{x'}} \\
    & \Rightarrow~ \MANDATORY(\texttt{z}), \set{\texttt{z} \neq \texttt{z}}\big)
\end{align*}

From \cref{t:nggd-sub-mggd}, we know that $1\ggd \subsetneq \ggd$, and we have shown that, for $\crpqneq$, we have $1\ggd = \pgkeys$, hence we have $\pgkeys \subsetneq \ggd$, as stated below.

\begin{corollary}\label{c:ggd-sub-pgk-cq-neq}
    For~$\crpqneq$, we have~$\pgkeys \subsetneq \ggd$.
\end{corollary}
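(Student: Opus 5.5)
The corollary follows from two results already established, one giving inclusion and one giving strictness.

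For the inclusion, I use \cref{p:mpgkey=pgkey}. For $\crpqneq$ it states that $\pgkeys$, $\mpgkeys$ and $1\ggd$ are equivalent. Every $1\ggd$ is syntactically a $\ggd$ (\cref{d:nggd}), so every \pgkey{} can be rewritten as a finite set of $1\ggd$s and hence of \ggds. This gives $\pgkeys \subseteq \ggd$ for $\crpqneq$. Alternatively, the translation of \cref{p:pgk-sub-ggd-cq-eq} uses only equality predicates, so it remains valid when inequalities are also allowed.

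For strictness, I apply \cref{t:nggd-sub-mggd} with $n=1$ and $m=2$. It gives a $2\ggd$ $\varphi = (Q_{2\text{-cycle}} \Rightarrow Q_{2\text{-middle}})$ that is not expressible by any set of $1\ggd$s over $\crpqneq$. That theorem is stated for $\crpqneq$, and its proof explicitly handles identifier inequalities: the path-rewriting step preserves inequality of paths, and the graphs carry no data.

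Now suppose $\varphi$ were equivalent to some set $\Psi$ of \pgkeys{}. By \cref{p:mpgkey=pgkey}, each member of $\Psi$ is equivalent to a finite set of $1\ggd$s, so $\varphi$ would be equivalent to a set of $1\ggd$s, contradicting \cref{t:nggd-sub-mggd}. Therefore $\pgkeys \subsetneq \ggd$.

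There is no real obstacle here. The only point to check is that both cited results are stated for exactly the same query language, $\crpqneq$. It is also worth noting that equivalence of constraint sets is closed under taking finite unions, so that replacing each \pgkey{} by its equivalent set of $1\ggd$s is legitimate. The same argument gives the statement for $\cqneq$, because the witness $\varphi$ is a \cq{} and both cited results also cover $\cqneq$.
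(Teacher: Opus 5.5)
Your proof is correct and follows the paper's own argument. The paper derives the corollary by combining $1\ggd = \pgkeys$ for $\crpqneq$ (\cref{p:mpgkey=pgkey}) with $1\ggd \subsetneq \ggd$ from \cref{t:nggd-sub-mggd}, which is the same inclusion step and the same transfer of the separating $2\ggd$ that you give.
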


Next, we show that inequality predicates let us simulate any $\gfd$ with a set of $1\ggds$.

\begin{propositionrep}
    \label{p:gfd-sub-1ggq-nq-neq}
    For~$\crpqneq$, we have $\gfd \subseteq 1\ggd$.
\end{propositionrep}
\begin{proofsketch}
    Let~$\varphi = (S(x, y, \bar{z}) \Rightarrow \set{\gamma})$ be a $2\gfd$, and recall that $2\gfd = \gfd$ (\cref{l:split-gfd}). $\gamma$ has one of the following forms: $x.a = y.b$,
    $x.a \neq y.b$, $x = y$, $x \neq y$, $x.a = c$, or~$x.a \neq c$. We consider only the case where~$\gamma$ has the form~$x.a = y.b$. Other cases are similar and can be found in \cref{a:p:gfd-sub-1ggq-nq-neq} as well as a more detailed proof.

    We define the following~$1\ggds$:
    \begin{align*}
        \psi_1 ={}& (S(x, y, \bar{z}) \Rightarrow \set{\ex(x.a)}) &
        \psi_2 ={}& (S(x, y, \bar{z}) \Rightarrow \set{\ex(y.b)}) \\
        \psi_3 ={}& (S(x, y, \bar{z}) \land \set{x.a \neq y.b} \Rightarrow \bot)
    \end{align*}
    Notice that $\psi_1$ and~$\psi_2$ respectively ensure that~$x.a$ and~$y.b$ are defined, while ~$\psi_3$ enforces $x.a = y.b$ when they exists. Hence, $\{\psi_1, \psi_2, \psi_3\}$ is a set of~$1\ggds$ equivalent to~$\varphi$.
\end{proofsketch}
\begin{proof}
    \label{a:p:gfd-sub-1ggq-nq-neq}
    Let $\varphi$ be a $\gfd$. \cref{l:split-gfd} says that $\varphi$ can be assumed to be of the form $\varphi = (S(x, y, \bar{z}) \Rightarrow C_t(x, y))$ with $|C_t| = 1$.
    We show how to construct a set $\Psi$ of $1\ggd$ that is equivalent to $\varphi$.
    Note that $C_t$ has one of the following forms: $\set{x.a = y.b}$,
    $\set{x.a \neq y.b}$, $\set{x = y}$, $\set{x \neq y}$, $\set{x.a = c}$, or $\set{x.a \neq c}$.

    \noindent {\bf Case 1:} $C_t = \set{x.a = y.b}$.
    The set $\Psi$ must ensure that, whenever $S$ is satisfied, then $x.a$ and $y.b$ are both defined and equal. We choose $\Psi = \set{\psi_1, \psi_2, \psi_3}$ where:
    \begin{align}
        \psi_1 ={}& \big(S(x, y, \bar{z}) \Rightarrow (x), \set{\ex(x.a)}\big)\\
        \psi_2 ={}& \big(S(x, y, \bar{z}) \Rightarrow (y), \set{\ex(y.b)}\big) \\
        \psi_3 ={}& \big(S(x, y, \bar{z}) \land \set{x.a \neq y.b} \Rightarrow (x), \bot\big)
    \end{align}

    Let $G = (V, E, \eta, \lambda, \pi)$ be a graph, and $\mathrel{\theta}$ be the binary relation over $V \cup E$ defined by: $e \mathrel{\theta} f$ if there exists a match~$h$ of $S$ such that $h(x) = e$ and $h(y) = f$. We then have the following equivalences:
    \begin{itemize}
        \item $G \models \psi_1$ if and only if for all $e \in \dom(\theta)$, $\pi(e, a)$ is defined;
        
        \item $G \models \psi_2$ if and only if for all $f \in \im(\theta)$, $\pi(f, b)$ is defined;
        
        \item $G \models \psi_3$ if and only if for all $e \in \dom(\theta)$ and $f \in \im(\theta)$,
        either $\pi(e, a)$ or $\pi(f, b)$ is not defined, or $\pi(e, a) = \pi(f, b)$. 
    \end{itemize}
    Hence $G$ satisfies $\Psi$ is and only if for all $e, f$ such that $e \mathrel{\theta} f$:
    $\pi(e, a)$ and $\pi(f, b)$ are defined, and $\pi(e, a) = \pi(f, b)$.
    This is equivalent to $G \models \varphi$. Other cases are similar and listed below. 

    \noindent {\bf Case 2:} $C_t = \set{x.a \neq y.b}$. Then $\Psi = \set{\psi_1, \psi_2, \psi_4}$, with:
    \begin{equation}
        \psi_4 = (S(x, y, \bar{z}) \land \set{x.a = y.b} \Rightarrow (x), \bot)
    \end{equation}
    \noindent {\bf Case 3:} $C_t = \set{x.a = c}$. Then $\Psi = \set{\psi_{=_c}}$, with:
    \begin{equation}
        \psi_{=_c} = (S(x, y, \bar{z}) \Rightarrow (x), \set{x.a = c})
    \end{equation}
    \noindent {\bf Case 4:} $C_t = \set{x.a \neq c}$. Then $\Psi = \set{\psi_{\neq_c}}$, with:
    \begin{equation}
        \psi_{\neq_c} = (S(x, y, \bar{z}) \Rightarrow (x), \set{x.a \neq c})
    \end{equation}
    {\bf Case 5:} $C_t = \set{x = y}$. Then $\Psi = \set{\psi_=}$, with:
    \begin{equation}
        \psi_= = (S(x, y, \bar{z}) \land \set{x \neq y} \Rightarrow (x), \bot)
    \end{equation}
    {\bf Case 6:} $C_t = \set{x \neq y}$. Then $\Psi = \set{\psi_{\neq}}$, with:
    \begin{equation}
        \psi_{\neq} = (S(x, y, \bar{z}) \land \set{x = y} \Rightarrow (x), \bot)
    \end{equation}
\end{proof}

The remaining inclusions and separations follow directly from
\cref{p:gfdc-neq-gfd-1ggd,l:split-gfd,p:closure-induced-subgraph},
yielding the hierarchy below (see also \cref{fig:venn-neq}).

\begin{theorem}
\label{thm:sec5}
For $\crpqneq$, we have that: $\gfd \subsetneq 1\ggd = \pgkeys{} \subsetneq \ggd$.
\end{theorem}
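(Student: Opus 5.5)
Theorem~\ref{thm:sec5} is obtained by assembling three facts for $\crpqneq$: the inclusion $\gfd\subseteq 1\ggd$, the equality $1\ggd=\pgkeys$, and the strictness of the two inclusions $\gfd\subsetneq 1\ggd$ and $\pgkeys\subsetneq\ggd$.

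First, the equality $1\ggd = \pgkeys$ is exactly \cref{p:mpgkey=pgkey}. It also gives $\mpgkeys=\pgkeys$, which is convenient later.

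Second, the inclusion $\gfd \subseteq 1\ggd$ is \cref{p:gfd-sub-1ggq-nq-neq}. That result first reduces any \gfd{} to a set of $2\gfd$s with a single head predicate, using \cref{l:split-gfd}. Each such $2\gfd$ is then rewritten into $1\ggd$s, using inequality to move the negated head predicate into the body with head $\bot$. The existence of the needed properties is enforced separately by constraints with head $\ex(\cdot)$.

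For strictness of $\gfd\subsetneq 1\ggd$, I invoke \cref{p:closure-induced-subgraph}, which is stated for $\crpqneq$ as well. Item~\ref{p:closure-induced-subgraph:gfd} says that every set of \gfd{}s is closed under induced subgraphs. Item~\ref{p:closure-induced-subgraph:exist} provides a $1\ggd$ $\psi$ and graphs $G'\sqsubseteq_{\text{ind}} G$ with $G\models\psi$ and $G'\not\models\psi$. No set of \gfd{}s can express $\psi$: it would accept $G$, hence $G'$ by closure, but $\psi$ rejects $G'$. One point to check is that the witness in the proof of item~2 (two connected components, with a target pattern matched only in the other component) uses no predicates at all. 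The argument is therefore insensitive to whether inequalities are available.

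Finally, $\pgkeys\subsetneq\ggd$ is \cref{c:ggd-sub-pgk-cq-neq}. It follows from $\pgkeys=1\ggd$ and from \cref{t:nggd-sub-mggd} with $n=1$, $m=2$, which was proved for $\crpqneq$. Chaining the equality, the two inclusions and the two separations yields the stated hierarchy.

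The only step requiring care is making sure that every cited separation argument remains valid once inequality predicates are allowed in the constraints. For \cref{t:nggd-sub-mggd} this is explicitly handled: the proof shows the modified match preserves path and identifier (in)equalities, and the graphs carry no data. For the closure argument, inequality predicates are evaluated identically on the induced subgraph, since labels and properties are retained, so item~1 holds verbatim.
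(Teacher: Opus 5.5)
Your proposal is correct and follows the paper's own route: the theorem is assembled from \cref{p:mpgkey=pgkey} for $1\ggd=\pgkeys$, \cref{p:gfd-sub-1ggq-nq-neq} for $\gfd\subseteq 1\ggd$, the induced-subgraph closure of \cref{p:closure-induced-subgraph} for $1\ggd\not\subseteq\gfd$, and \cref{c:ggd-sub-pgk-cq-neq} (via \cref{t:nggd-sub-mggd} with $n=1$, $m=2$) for $\pgkeys\subsetneq\ggd$. Your explicit check that these separation witnesses stay valid once inequality predicates are allowed is a sensible addition; the paper leaves that check implicit.
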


\section{Related Work}
\label{sec:related}
Property graph constraint languages have largely been studied in isolation. \gfd{}~\cite{DBLP:conf/sigmod/FanWX16a} lift relational functional dependencies to account for attributes and graph patterns. \ged{}~\cite{DBLP:journals/tods/FanL19} subsume \gfd{} and keys for graphs~\cite{DBLP:journals/pvldb/FanFTD15} in a uniform formalism by allowing equality over object identifier, a difference we capture through the choice of admissible predicates. \gdd{}~\cite{DBLP:journals/pvldb/KwashieLLLSY19} further add distance and similarity predicates for approximate entity resolution. On the TGD side, \ggd{}~\cite{DBLP:conf/cikm/ShimomuraFY20} allow the head of a constraint to be an arbitrary graph pattern. Other proposals target uniqueness specifically, either for Neo4j's property graph model~\cite{DBLP:conf/er/Link20}, as general integrity constraints in database systems~\cite{DBLP:conf/ant/PokornyVK17}, or as uniqueness constraints over labels and composite properties~\cite{DBLP:journals/jdiq/SkavantzosLZL23,DBLP:conf/caise/SkavantzosZL21}, which underpin normal forms for property graphs~\cite{DBLP:journals/vldb/SkavantzosL25}. While TGD and their fragments have been extensively studied in the relational and RDF settings~\cite{DBLP:conf/lics/CalvaneseGLV00,DBLP:conf/ijcai/GottlobP15}, no comparable analysis exists for property graphs.

\pgkeys{}~\cite{DBLP:conf/sigmod/AnglesBDFHHLLLM21} consolidate this line of work in a single mechanism to identify and reference property graph objects and are, through \pgschema{}~\cite{DBLP:journals/pacmmod/AnglesBD0GHLLMM23}, a candidate for the second edition of GQL. The closest work to ours is that of Manouvrier and Belhajjame~\cite{DBLP:conf/adbis/ManouvrierB24,DBLP:journals/is/ManouvrierB26}, who survey graph dependency formalisms and give mappings translating them into \pgkeys{}.
Hence, their focus is on proposing implementable mappings, whereas ours is on establishing formal expressiveness hierarchies.
Moreover, the target language they use for translations is what we refer to as unconstrained \pgkeys{} (\cref{unconstrained}).
In their setting, \pgkeys{} trivially subsume other formalisms, being equivalent to \ggd{} (\cref{r:npgk-is-ggd-cq-eq}). In fact, our results show this relaxation to be precisely what makes \ggd{} translatable, as \pgkeys{} are strictly weaker than \ggd{} whether or not inequality is available (\cref{p:ggd-neq-pgk-cq-eq}, and \cref{c:ggd-sub-pgk-cq-neq}).

Comparative studies remain scarce. Fan and Lu~\cite{DBLP:conf/sigmod/FanWX16a} compare classes of graph dependencies, but predate both \ggd{} and \pgkeys{}. For schema and shape languages, Ahmetaj et al.~\cite{DBLP:conf/www/AhmetajBHHJGMMM25} provide a common foundation for SHACL~\cite{shacl}, ShEx~\cite{DBLP:conf/i-semantics/PrudhommeauxGS14}, and \pgschema{}, recently extended with property constraints~\cite{DBLP:conf/kcap/TomaszukG25}, and a rich body of work analyses the expressive power of RDF query and navigational languages~\cite{DBLP:conf/www/ArenasCP12,DBLP:journals/tods/PerezAG09}. None of these address dependencies.

\section{Conclusion and Perspectives}
\label{sec:conclusion}
We positioned \pgkeys{} with respect to \gfd{} and \ggd{}, highlighting that the number of shared variables is crucial to expressiveness, while assertion keywords ($\EXCLUSIVE$, $\SINGLETON$) can be simulated when inequality predicates are allowed.
Our results inform future constraint language design and open several research directions detailed below.

\subparagraph{Connectedness}
Query connectedness is motivated by practical considerations. For example, Neo4j issues a warning in the presence of a disconnected pattern, as it results in a Cartesian product and degraded performance~\cite{neo4j-cartprod}. The same observation applies to constraints.
In general, imposing connectedness seems to strictly reduce the expressive power of \ggd{}, \gfd{} and \pgkeys{}, but not necessarily to the same extent.
For instance, we used disconnected constraints to simulate the $\EXCLUSIVE$ keyword in \ggd{} (\cref{p:pgk-sub-ggd-cq-eq,p:mpgkey=pgkey}). Under connectedness, we expect the inclusion not to hold, and~$\pgkeys$ to be incomparable with \ggd{} in expressive power.
Assertion keywords can thus be interpreted as a weak form of disconnectedness: no connected \ggd{} can simulate the \pgkey{} 
$\big(x \Rightarrow~ \EXCLUSIVE(x.a)\big)$, which is connected. The exact impact of connectedness requires further investigation.

\subparagraph{Underlying query language}

Practical languages, such as Cypher~\cite{DBLP:conf/sigmod/FrancisGGLLMPRS18} and GQL~\cite{DBLP:conf/sigmod/DeutschFGHLLLMM22,GQL-ISO}, are far more expressive than the fragment we consider (\crpq{} with (in)equality). For starters they support bidirectional edge navigation (C2\rpq) which, from an expressive power perspective, strictly subsume \rpq, and might collapse some of the constraint fragments.

Moreover, we evaluate \rpq{} and \crpq{} under the so-called \emph{all-walk semantics}, in reference to the $\ALL$ $\WALK$ keywords in GQL~\cite{GQL-ISO}.
Other semantics used in Cypher and GQL (e.g., trail or shortest semantics) could significantly alter the landscape in \cref{fig:inclusion-cqeq-cqneq}.
Indeed, \crpq{} under trail semantics are not stable under homomorphism, 
and \crpq{} are not monotonic under shortest semantics, two key properties used in various proofs.

Cypher and GQL allow to \emph{decompose} path variables.
For instance, the GQL query $Q_s={}${\small\tt{MATCH p = (x)-[e]->(y)-[f]->(z)}} decomposes the path variable~{\small\tt p} into the variables {\small\tt x,e,y,f,z}: the variable~{\small\tt y} (resp.~{\small\tt f}) is necessarily bound to the second vertex (resp.~edge) of the path associated with {\small\tt p}. This effectively removes limits on shared variables.
Indeed, consider the constraint where~$Q_s$ is the source pattern and the target pattern is~$Q_t={}${\small\tt{MATCH p  = (x')-[e']->(y')-[f']->(z')}}.
Although those queries share only one (path) variable~{\small\tt p}, the equalities {\small\tt x}${}={}${\small\tt x'}, {\small\tt e}${}={}${\small\tt e'}, and so on, are implicitly enforced. 
For any connected source pattern, a path can cover all variables and transmit them to the target with a single variable.

\subparagraph{Restrictions on the number of shared variables}

In addition to playing a crucial role for expressive power, the number of shared variables between the left- and right-hand sides of constraints also impacts the complexity of related tasks.
For instance, the validation problem consists in deciding whether a graph $G$ satisfies a set of constraints $\Sigma$. In combined complexity, \gfd{} validation is \conp-complete~\cite{DBLP:conf/sigmod/FanWX16a} and \ggd{} validation is \piip-complete~\cite{DBLP:conf/cikm/ShimomuraFY20}.
To our knowledge, no work parametrizes this complexity by the number of shared variables.
A careful analysis shows that, when~$n$ is fixed, the naive validation algorithm for $n$\ggd{} \cite{DBLP:conf/cikm/ShimomuraFY20} is in \diip, a class believed to be strictly included in \piip. Moreover, the restricted case of 1\ggd{} for \cqeq{} can be expressed in first-order logic with unary negation and bounded negation depth, a fragment whose model-checking can be done in \tiip \cite{DBLP:journals/corr/SegoufinC13}, a subclass of \diip{}. This hints at the fact that the \diip{} bound may not be tight in general. We leave this interesting question for future work.


\bibliography{references.bib}

\end{document}